\newtheorem{lemma}{Lemma}
\newtheorem{theorem}{Theorem}
\newtheorem{definition}{Definition}
\begin{document}
\begin{frontmatter}
\title{ Unique Entity Estimation with Application to the Syrian Conflict}
\runtitle{Entity Estimation with Application to the Syrian Conflict}
\begin{aug}
  \author{\fnms{Beidi} \snm{Chen}\ead[label=e1]{beidi.chen@rice.edu}},
  \author{\fnms{Anshumali} \snm{Shrivastava}\ead[label=e2]{anshumali@rice.edu}}, \and
    \author{\fnms{Rebecca C.} \snm{Steorts}\corref{}\ead[label=e3]{beka@stat.duke.edu}}
  \runauthor{Chen, Shrivastava, and Steorts}
    \address{Department of Computer Science \\
   Rice University, Houston, TX USA\\ \printead{e1,e2}}
      \address{Department of Statistical Science \\and Computer Science\\ Duke University \\
  Durham, NC USA\\ \printead{e3}}
\end{aug}

\begin{abstract}
Entity resolution identifies and removes duplicate entities in large, noisy databases and has grown in both usage and new developments as a result of increased data availability. Nevertheless, entity resolution has tradeoffs regarding assumptions of the data generation process, error rates, and computational scalability that make it a difficult task for real applications. In this paper, we focus on a related problem of unique entity estimation, which is the task of estimating the unique number of entities and associated standard errors in a data set with duplicate entities. Unique entity estimation shares many fundamental challenges of entity resolution, namely, that the computational cost of all-to-all entity comparisons is intractable for large databases. To circumvent this computational barrier, we propose an efficient (near-linear time) estimation algorithm based on locality sensitive hashing. Our estimator, under realistic assumptions, is unbiased and has provably low variance compared to existing random sampling based approaches. In addition, we empirically show its superiority over the state-of-the-art estimators on three real applications. The motivation for our work is to derive an accurate estimate of the documented, identifiable deaths in the ongoing Syrian conflict. Our methodology, when applied to the Syrian data set, provides an estimate of $191,874 \pm 1772$ documented, identifiable deaths, which is very close to the Human Rights Data Analysis Group (HRDAG) estimate of 191,369. Our work provides an example of challenges and efforts involved in solving a real, noisy challenging problem where modeling assumptions may not hold.
\end{abstract}

\begin{keyword}
\kwd{Syrian conflict}
\kwd{unique entity estimation}
\kwd{entity resolution}
\kwd{clustering}
\kwd{dimension reduction}
\end{keyword}

\end{frontmatter}

\section{Introduction}
\label{sec:intro}

Our work is motivated by a real estimation problem associated with the ongoing conflict in Syria. While deaths are tremendously well documented, it is hard to know how many unique individuals have been killed from conflict-related violence in Syria. Since March 2011, increasing reports of deaths have appeared in both the national and international news. There are many inconsistencies from various media sources, which is inherent due to the data collection process and the fact that reported victims are documented by multiple sources. Thus, our ultimate goal is to determine an accurate number of documented,  identifiable deaths (with associated standard errors) because such information may contribute to future transitional justice and accountability measures. For instance, statistical estimates of death counts have been introduced as evidence in national court cases and international tribunals investigating the responsibility of state leaders for crimes against humanity \citep{hrdag-2017}.


The main challenge with reliable death estimation of the Syrian data set is the fact that individuals who are documented as dead are often duplicated in the data sets. To address this challenge, one could employ entity resolution (de-duplication or record linkage), which refers to the task of removing duplicated records in noisy datasets that refer to the same entity  \citep{liseo_2011, sadinle_2014, getoor_2006, gutman_2013,  McCallumWellner04, fellegi_1969}. Entity resolution is fundamental in many large data processing applications. Informally, let us assume that each entity (records) is a vector in $\mathbb{R}^D$. Then given a data set of $M$ records aggregated from many data sources with possibly numerous duplicated entities perturbed by noise, the task of entity resolution is to identify and remove the duplicate entities. For a review of entity resolution see \citep{winkler_2006, christen_2012, liseo2013some}.

One important subtask of entity resolution is estimating the number of unique entities (records) $n$ out of $M > n$ duplicated entities, which we call \emph{unique entity estimation}. Entity resolution is a more difficult problem because it requires one to link each entity to its associated duplicate entities. To obtain high-accuracy entity resolution, the algorithms must at least evaluate a significant amount of pairs for potential duplicates to ensure a link is not missed. Due to this (and to the best of our knowledge), accurate entity resolution algorithms scale quadratically or higher ($> O(M^2))$ making them computationally intractable for large data sets.  Reducing the computational cost in entity resolution is known as blocking, which, via deterministic or probabilistic algorithms, places similar records into blocks or bins \citep{christen_2012, steorts14comparison}. The computational efficiency comes at the cost of missed links and reduced accuracy for entity resolution. Further, it is not clear if we can use these crude but cheap entity resolution sub-routines for unbiased estimation of unique entities with strong statistical guarantees.

The primary focus of this paper is on developing a \emph{unique entity estimation} algorithm that is motivated by the ongoing conflict in Syria and has the following desiderata:
\begin{enumerate}
  \item The estimation cost should be significantly less than quadratic ($O(M^2)$). In particular, any methodology requiring one to evaluate all pairs for linkage is not suitable. This is crucial for the Syrian data set and other large, noisy data sets (Section~\ref{sec:SyriaChallenges}).
  \item To ensure accountability regarding estimating the unique number of documented identifiable victims in the Syrian conflict, it is essential to understand the statistical properties of any proposed estimator. Such a requirement eliminates many heuristics and rule-based entity resolution tasks, where the estimates may be very far from the true value.
  \item  In most real entity resolution tasks, duplicated data can occur with arbitrarily large changes including missing information, which we observe in the Syrian data set, and standard modeling assumptions may not hold due to the noise inherent in the data. Due to this, we prefer not to make strong modeling assumptions regarding the data generation process.
\end{enumerate}

\subsection{Related Work for Unique Entity Estimation}
\label{sec:RelatedWork}

The three aforementioned desiderata eliminate all but random sampling-based approaches. In this section, we review them briefly.

To our knowledge, only two random sampling based methodologies satisfy such requirements. \cite{1978paper} proposed sampling a large enough subgraph to estimate the total number of connected components based on the properties of the sub-sampled subgraph. Also, \cite{chazelle2005approximating} proposed finding connected components with high probability by sampling random vertices and then visiting their associated components using breadth-first search (BFS). One major issue with random sampling is that most sampled pairs are unlikely to be matches (no edge) providing nearly no information, as the underlying graph is generally very sparse in practice. Randomly sampling vertices and running BFS required by~\cite{chazelle2005approximating} are very likely to result in singleton vertices because many records are themselves unique in entity resolution data sets. In addition, finding all possible connections of a given vertex would require $O(M)$ query for edges. A query for edges corresponds  to the query for actual link between two records. Sub-sampling a sub-graph, as in~\cite{1978paper}, of size $s$ requires $O(s^2)$ edge queries to completely observe it. Thus, $s$ should be reasonably small in order to scale. Unfortunately, requiring a small $s$ hurts the variance of the estimator. We show that the accuracy of both aforementioned methodologies is similar to the non-adaptive variant of our estimator which has provably large variance. In addition, we show both theoretically and empirically that the methodologies based on random sampling lead to poor estimators.

While some methods have recently been proposed for accurate estimation of unique records, they belong to the Bayesian literature and have difficulty scaling due to the curse of dimensionality with Markov chain Monte Carlo \cite{steorts??bayesian, sadinle_2014, liseo_2011}. The evaluation of the likelihood itself is quadratic. Furthermore, they rely on a strong assumption about the specified generative models for the duplicate records. Given such computational challenges with the current state of the methods in the literature, we take a simple approach, especially given the large and constantly growing  data sets that we seek to analyze. We focus on practical methodologies that can easily scale to large data sets with minimal assumptions. Specifically, we propose a unique entity estimation algorithm with sub-quadratic cost, which can be reduced to approximating the number of connected components in a graph with sub-quadratic queries for edges (Section~\ref{form}).

The rest of the paper proceeds as follows. Section \ref{syriadata} provides our motivational application from the Syrian conflict
and Section \ref{sec:SyriaChallenges} remarks on the main challenges of the Syrian data set and our proposed methodology. Section \ref{sec:lsh} provides background on variants of locality sensitive hashing (LSH), which is essential to our proposed methodology.  Section \ref{proposal} provides our proposed methodology for unique entity estimation, which is the first formalism of using efficient adaptive LSH on edges to estimate the connected components with sub-quadratic computational time. (An example of our approach is given in section \ref{sec:toy}). More specifically, we draw connections between our methodology and
 random and adaptive sampling in section \ref{analysis}, where we show under realistic assumptions that our estimator is theoretically unbiased and has provably low variance. In addition, in section \ref{sec:randVsAdaptive}, we compare random and adaptive sampling for the Syrian data set, illustrating the strengths of adaptive sampling. In section \ref{sec:missingVariant}, we introduction the variant of LSH used in our paper. Section \ref{putit} provides our complete algorithm for unique entity estimation.
Section \ref{experiments} provides evaluations of all the related estimation methods on three real data sets from the music and food industries as well as official statistics. Section \ref{syria} reports the documented identifiable number of deaths in the Syrian conflict (with a standard error).

\subsection{The Syrian Conflict}
\label{syriadata}
Thanks to Human Rights Data Analysis Group (HRDAG), we have access to four databases from the Syrian conflict which cover roughly the same period, namely March 2011 -- April 2014, namely, the Violation Documentation Centre (VDC), Syrian Center for Statistics and Research (CSR-SY), Syrian Network for Human Rights (SNHR), and Syria Shuhada website (SS). Each database lists a different number of recorded victims killed in the Syrian conflict, along with available identifying information including full Arabic name, date of death, death location, and gender.\footnote{These databases include documented identifiable victims and not those who are missing in the conflict, hence, any estimate reported only refers to the data at hand.}


Since the above information is collected indirectly, such as through friends and religious leaders, or traditional media resources, it naturally comes with many challenges. The data set has biases, spelling errors, and missing values. In addition, it is well known that there are duplicate entities present in the data sets, making estimation more difficult.
The ambiguities in Arabic names make the situation significantly worse as there can be a large textual difference between the full and short names in Arabic. (It is not surprising that the Syrian data set has such biases given that the data is collected in the midst of a conflict).

Such ambiguities and lack of additional information make entity resolution on this data set considerably challenging~\citep{price2014updated}. Owing to the significance of the problem, HRDAG has provided labels for a large subset of the data set. More specifically, five different human experts from the HRDAG manually reviewed pairs of records in the four data sets, classifying them as matches if referred to the same entity and non-matches otherwise. {\em Our first {\bf goal} is to accurately estimate the number of unique victims.} Obtaining a match or non-match label of a given record pair may require momentous cost such as manual human supervision or involving sophisticated machine learning. Given that coming up with hand-matched data is a costly process,  {\em our second {\bf goal}} is to provide a proxy, automated mechanism to create labeled data. (More information regarding the Syrian data set can be found in Appendix \ref{sec:syrian}).

\subsection{Challenges and Proposed Solutions}
\label{sec:SyriaChallenges}
Consider evaluating the Syrian data set using all-to-all records comparisons to remove duplicate entities. With approximately 354,000 records from the Syrian data set, we have around 63 billion pairs ($6.3 \times 10^{10}$). Therefore, it is impractical to classify all these pairs as matches/non-matches reliably. We cannot expect a few experts (five in our case) to manually label 63 billion pairs. A simple computation of all pairwise similarity (63 billion) takes more than 8 days on a heavyweight machine that can run 56 threads in parallel (28 cores in total). In general, this quadratic computational cost is widely considered infeasible for large data sets. Algorithmic labeling of every pair, even if possible for relatively small datasets, is neither reliable nor efficient. Furthermore, it is hard to understand the statistical properties of algorithmic labelling of pairs. Such challenges, therefore, motivate us to focus on the estimation algorithm with constraints mentioned in Section~\ref{sec:intro}.

{\bf Our Contributions:} We formalize unique entity estimation as approximating the number of connected components in a graph with sub-quadratic $\ll O(M^2)$ computational time. We then propose a generic methodology that provides an estimate in sample (with standard errors). Our proposal leverages locality sensitive hashing (LSH) in a novel way for the estimation process, with the required computational complexity that is less than quadratic. Our proposed estimator is unbiased and has provably low variance compared to random sampling based approaches. To the best of our knowledge this is the first use of LSH for unique entity estimation in an entity resolution setting.
Our unique entity estimation procedure is broadly applicable to many applications, and we illustrate this on three additional real, fully labelled, entity resolution data sets, which include the food industry, the music industry, and an application from official statistics. In the absence of ground truth information, we estimate that the number of documented identifiable deaths for the Syrian conflict is 191,874, with standard deviation of 1,772, reported casualties, which is very close to the 2014 HRDAG estimate of 191,369. This clearly demonstrates the power of our efficient estimator in practice, which does not rely on any strong modeling assumptions. Out of 63 billion possible pairs, our estimator only queries around 450,000 adaptively sampled pairs ($\simeq O(M)$) for labels, yielding a 99.99\% reduction. The labelling was done using support vector machines (SVMs) trained on a small number of hand-matched, labeled examples provided by five domain experts. Our work is an example of the efforts required to solve a real noisy challenging problem where modeling assumptions may not hold.


\section{Variants of Locality Sensitive Hashing (LSH)}
\label{sec:lsh}
In this section, we first provide a review of LSH and min-wise hashing, which is crucial to our proposed methodology. We then introduce a variant of LSH --- Densified One Permutation Hashing (DOPH), which is essential to our proposed algorithm for unique entity estimation in terms of scalability.

\subsection{Shingling}
In entity resolution tasks, each record can be represented as a string of information. For example, each record in the Syrian data set can be represented as a short \textit{text} description of the person who died in the conflict. In this paper, we use a k-grams based shingle representation, which is the most common representation of text data and naturally gives a set token (or k-grams).
That is, each record is treated as a string and is replaced by a ``bag'' (or ``multi-set'') of length-$k$ contiguous sub-strings that it contains. Since we will use a k-gram based approach to transform the records, our representation of each record will also be a set, which consists of all the $k$-contiguous characters occurring in record string. As an illustration, for the record \text{BAKER, TED}, we separate it into a 2-gram representation. The resulting set is the following: $$\text{BA, AK, KE, ER, ER, TE, ED}.$$ In another example, consider \text{Sammy, Smith}, whose 2-gram set representation is $$\text{SA, AM, MM, MY, MS, SM, MI, IT, TH}.$$ We now have two records that have been transformed into a 2-gram representation. Thus, for every record (string) we obtain a set $\subset \mathcal{U}$, where the universe $\mathcal{U}$ is the set of all possible $k$-contiguous characters.

\subsection{Locality Sensitive Hashing}
LSH---a two-decade old probabilistic technique and method for dimension reduction---comes with sound mathematical formalism and guarantees. LSH is widely used in computer science and database engineering as a way of rapidly finding approximate nearest neighbors \citep{Proc:Indyk_STOC98,gionis_1999}. Specifically, the variant of LSH that we utilize is scalable to large databases, and allows for similarity based sampling of entities in less than a quadratic amount of time.

In LSH, a hash function is defined as $y = h(x),$ where $y$ is the \emph{hash code} and $h(\cdot)$ the \emph{hash function}. A \emph{hash table} is a data structure that is composed of \emph{buckets} (not to be confused with blocks), each of which is indexed by a \emph{hash code}. Each reference item $x$ is placed into a bucket $h(x).$

More precisely, LSH is a family of function that map vectors to a discrete set, namely, $h:\mathbb{R}^D \rightarrow \{1, \ 2,\cdots , M\}$, where $M$ is in finite range. Given this family of functions, similar points (entities) are likely to have the same hash value compared to dissimilar points (entities). The notion of similarity is specified by comparing two vectors of points (entities), $x$ and $y.$ We will denote a general notion of similarity by $\text{SIM}({x,y}).$ In this paper, we only require a relaxed version LSH, and we define this below. Formally, a LSH is defined by the following definition below:

\begin{definition} (Locality Sensitive Hashing (LSH))\ Let $x_1, \ x_2, \ y_1, \ y_2 \in \mathbb{R}^D$ and suppose $h$ is chosen uniformly from a family $\mathcal{H}.$ Given a similarity metric, $\text{SIM}(x,y)$, $\mathcal{H}$ is locality sensitive if $\text{SIM}(x_1,x_2)\ge Sim(y_2,y_3)$ then ${Pr}_\mathcal{H}(h(x_1) = h(x_2)) \ge {Pr}_\mathcal{H}(h(y_1) = h(y_2)),$ where ${Pr}_\mathcal{H}$ is the probability over the uniform sampling of $h$.
\end{definition}

The above definition is sufficient condition for a family of function to be LSH. While many popular LSH families satisfy the aforementioned property, we only require this condition for the work described herein. For a complete review of LSH, we refer to \cite{rajaraman_2012}.

\subsection{Minhashing}
One of the most popular forms of LSH is minhashing~\citep{Proc:Broder}, which has two key properties --- a type of similarity and a type of dimension reduction. The type of similarity used is the Jaccard similarity and the type of dimension reduction is known as the minwise hash, which we now define.

Let $\{0,1\}^D$ denote the set of all binary $D$ dimensional vectors, while $\mathbb{R}^D$ refers to the set of all $D$ dimensional vectors (of records). Note that records can be represented as a binary vector (or set) representation via shingling, BoW, or combining these two methods. More specifically, given two record sets (or equivalently binary vectors) $x,y \in \{0,1\}^D,$ the Jaccard similarity between $x, y \in \{0,1\}^D$  is\begin{align*}
\mathcal{J} = \frac{|x \cap y|}{|x \cup y|},
\end{align*}
where $|\cdot|$ is the cardinality of the set.

More specifically, the minwise hashing family applies a random permutation $\pi$, on the given set $S$, and stores only the minimum value after the permutation mapping, known as the \emph{minhash}.  Formally, the minhash is defined as $h_{\pi}^{min}(S) = \min(\pi(S))$, where $h(\cdot)$ is a hash function.

Given two sets $S_1$ and $S_2$, it can be shown by an elementary probability argument that
\begin{equation}
\label{eq:MinHash}
Pr_{\pi}({h_{\pi}^{min}(S_1) = h_{\pi}^{min}(S_2)) =  \frac{|S_1 \cap S_2|}{| S_1 \cup S_2|}},
\end{equation}
where the probability is over uniform sampling of $\pi$. It follows from Equation~\ref{eq:MinHash} that minhashing is a LSH family for the Jaccard similarity.

\textbf{Remark}: In this paper, we utilize a shingling based approach, and thus, our representation of each record is likely to be very sparse. Moreover, \cite{shrivastava2014defense} showed that minhashing based approaches are superior compared to random projection based approaches for very sparse datasets.

\subsubsection{Densified One Permutation Hashing (DOPH)}
For realistically sized entity resolution tasks, sampling based on LSH requires hundreds of hashes (Section~\ref{sec:KLParametrizedLSH}). It is well known that computing several minwise hashes of data is a very costly operation~\citep{li2012gpu}. Fortunately, recent literature on DOPH has shown that it is possible to compute several hundreds or even thousands of hashes of the data vector in one pass with nearly identical statistical properties as minwise hashes \citep{shrivastava2014densifying,shrivastava2014improved,shrivastava2017optimal}. In this paper, we will use the most recent variant of DOPH, which is significantly faster in practice compared to minwise hashing ~\citep{shrivastava2017optimal}. Throughout the paper, our mention of minwise hashing will refer to the DOPH algorithm for computing minhashes, which we have just mentioned. The complete details can be found in the aforementioned papers.

\section{Unique Entity Estimation}
\label{proposal}
In this section, we provide notation used throughout the rest of the paper and provide an illustrative example.
We then propose our estimator, which is unbiased and has provably low variance. In addition, random sampling is a special case of our procedure as explained in section~\ref{sec:randVsAdaptive}.  Finally, we present our unique entity estimation algorithm in section~\ref{analysis}.
\subsection{Notation}
\label{form}
The problem of unique entity estimation can be reduced to approximating the number of connected components in a corresponding graph.
Given a data set with size $M$, we denote the records as
$$R = \{R_i| 1\leq i \leq M, \; i \in \mathbb{Z}\}.$$
Next, we define
$$Q(R_i, R_j) =\begin{cases}
1, & \text{if $R_i, R_j$ refer to the same entity }.\\
0, & \text{otherwise}. \end{cases}.$$
Let us represent the data set by a graph $G^* = (E, V),$ with vertices $E,V.$ Let vertex $V_i$ correspond to record $R_i$ and vertex $V_j$ correspond to record $R_j$. Then let edge $E_{ij}$ represent the linkage between records of $R_i$ and $R_j$ (or
vertex $V_i$ and $V_j$). More specifically, we can represent this by the following relationship:
\begin{align*}
V &= \{R_i|1\leq i \leq M, i \in \mathbb{Z}\}, \; \text{and} \;
E = \{(R_i, R_j)| \forall\; 1\leq i, j \leq M, \; Q(R_i, R_j) = 1\}.
\end{align*}

\subsection{Illustrative Example}
\label{sec:toy}

In this section, we provide an illustrative example of how six records are mapped to a graph $G^*$.
Consider record 3 (John) and record 5 (Johnathan) which correspond to the same entity (John Schaech). In $G^*$, there is an edge $E_{35}$ that connect these records, denoted by $V_3$ and $V_5.$
Now consider records 2, 4, and 6, which all refer to the same entity (Nicholas Cage). In $G^*$, there are edges $E_{24}, E_{26},$ and $E_{46}$  that connect these records, denoted by $V_2, V_4,$ and $V_6.$
Observe that each connected component in $G^*$ is a unique entity and also a clique. Therefore, our task is reduced to estimating the number of connected components in $G^*$.

\begin{figure}[ht]
	\begin{center}
		\includegraphics[width=\linewidth]{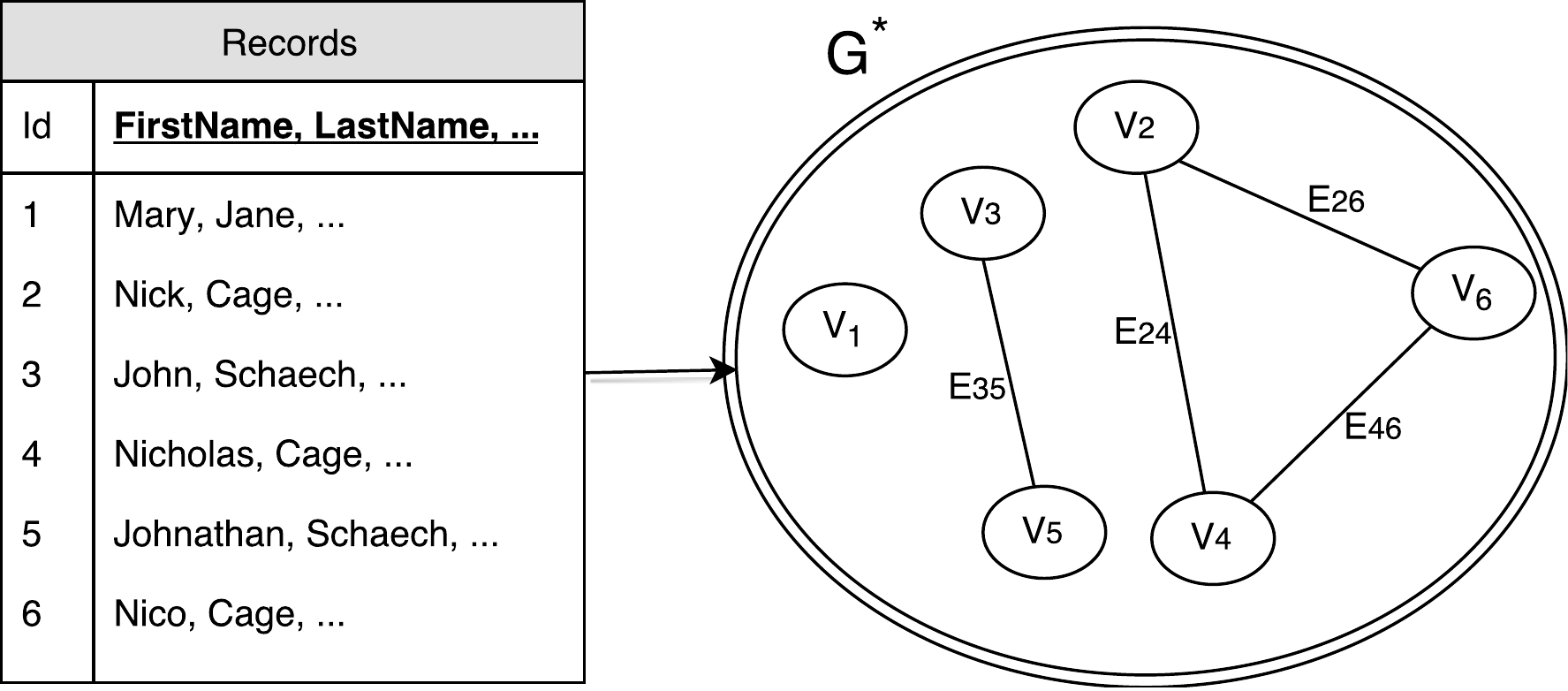}
		\caption{A toy example of mapping records to a graph, where vertices represent records and edges refer to the relation between records.}
		\label{map}
	\end{center}
	\vskip -0.2in
\end{figure}

\subsection{Proposed Unique Entity Estimator}
\label{analysis}
In this section, we propose our unique entity estimator and provide assumptions that are necessary for our estimation procedure to be practical (scalable).

Since we do not observe the edges of $G^*$ (the linkage), inferring whether there is an edge between two nodes (or whether two records are linked) can be costly, i.e., $O(M^2)$. Hence, one is constrained to probe a small set $\mathcal{S} \subset V \times V$ with $|\mathcal{S}|\ll O(M^2)$ of pairs and query if they have edges. The aim is to use the information about $\mathcal{S}$ to estimate the total number of connected components accurately. More precisely,  given the partial graph $G' = \{V,E'\}$, where $E' = E \cap \mathcal{S}$, one wishes to estimate the connected components $n$ of $G^* = \{V,E\}.$

One key property of our estimation process is that we do not make any modeling assumptions of how duplicate records are generated, and it is not immediately clear how we can obtain unbiased estimation. For sake of simplicity, we first assume the existence of an efficient (sub-quadratic) process that samples a small set (near-linear size) of edges $\mathcal{S}$, such that every edge in the original graph $G^*$ has (reasonably high) probability $p$ of being in $\mathcal{S}$. Thus, set $\mathcal{S}$, even though small, contains $p$ fraction of the actual edges. For sparse graphs, as in the case of duplicate records, such a sampler will be far more efficient than random sampling. Based on this assumption, we will first describe our estimator and its properties. We then show why our assumption about existence of adaptive sampler is practical by providing an efficient sampling process based on LSH (Section~\ref{samplep}).

\textbf{Remark}: It is not difficult to see that random sampling is a special case when $p = \frac{|\mathcal{S}|}{O(M^2)}$ which, as we show later, is a very small number for any accurate estimation.

Our proposed estimator and corresponding algorithm obtains the set of vertex pairs (or edges) $\mathcal{S}$  through an efficient (adaptive) sampling process and queries whether there is an edge (linkage) between each pair in $\mathcal{S}$.  Respectively, after the ground truth querying, we observe a sub-sampled graph $G'$, consisting of vertices returned by the sampler. Let $n_i'$ be the number of connected component of size $i$ in the observed graph $G'$, i.e., $n_1'$ is the number of singleton vertices, $n_2'$ is the number of isolated edges, etc. in $G'$. It is worth noting that every connected component in $G'$ is a part of some clique (maybe larger) in $G^*$.  Let $n_i^*$ denote the number of connected components (clique) of size $i$ in the original (unobserved) graph $G^*$.

Observe that under the sampling process, any original connected component, say $C_i^*$ (clique), will be sub-sampled and can appear as some possibly smaller connected component in $G'$. For example, a singleton set in $G^*$ will remain the same in $G'$. An isolated edge, on the other hand, can appear as an edge in $G'$ with probability $p$ and as two singleton vertices in $G'$ with probability $1-p$. A triangle can decompose into three possibilities with probability shown in Figure~\ref{p}. Each of these possibilities provides a linear equation connecting $n_i^*$ to $n_i'$. These equations up to cliques of size three are

\begin{align}
\label{n3}
&\mathbb{E}[n_3']= n_3^* \cdot p^2 \cdot (3-2p) \\
&\mathbb{E}[n_2'] = n_2^* \cdot p + n_3^*\cdot (3\cdot(1-p)^2\cdot p) \\
\label{n1}
&\mathbb{E}[n_1'] = n_1^* + n_2^* \cdot(2\cdot(1-p)) + n_3^* \cdot (3\cdot (1-p)^2).
\end{align}

Since we observe $n_i'$,  we can solve for the estimator of each $n_i^*$ and compute the number of connected components by summing up all $n_i^*$.

\begin{figure}[ht]
	\begin{center}
		\includegraphics[width=\linewidth]{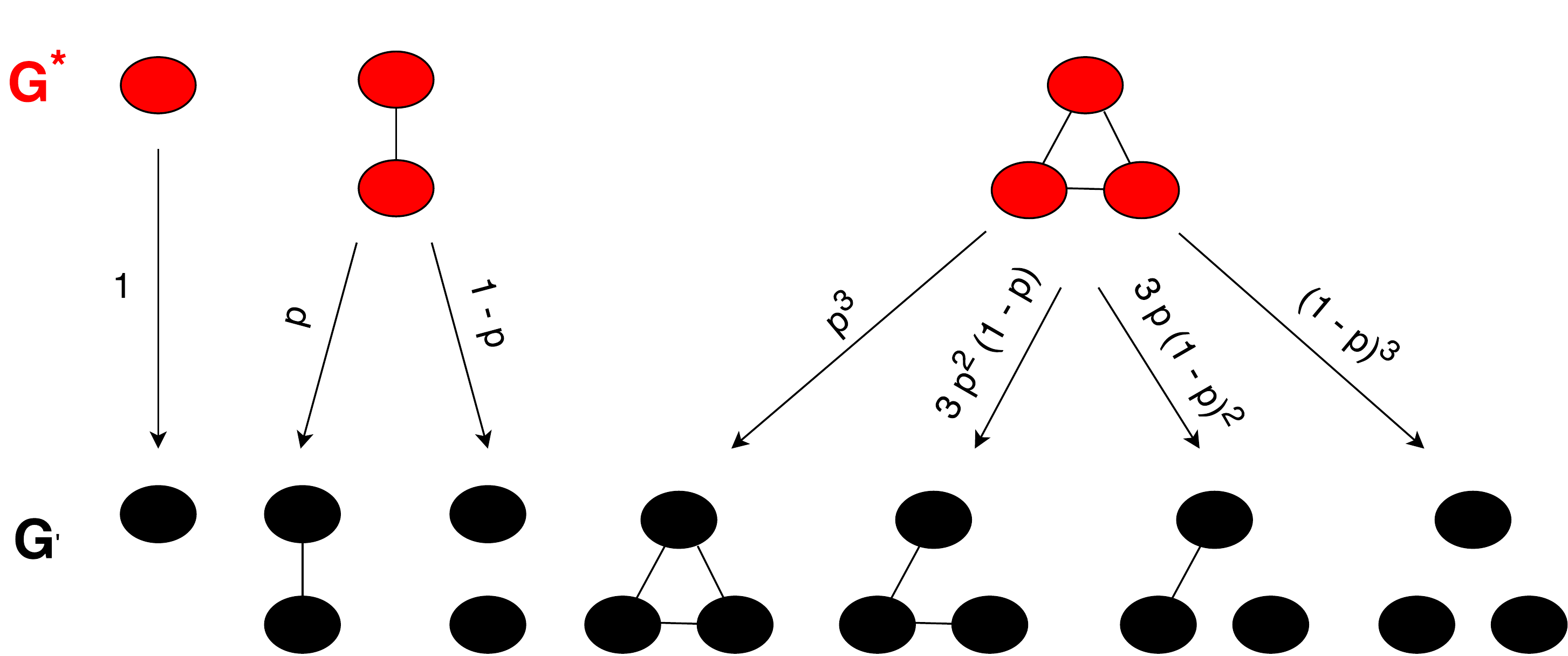}
		\caption{A general example illustrating the transformation and probabilities of connected components from $G^*$ to $G'.$}
		\label{p}
	\end{center}
	\vskip -0.2in
\end{figure}

Unfortunately, this process quickly becomes combinatorial, and in fact, is at least $\#P$ hard~\citep{doi:10.1137/0212053} to compute for cliques of larger sizes. A large clique of size $k$ can appear as many separate connected components and the possibilities of smaller size components it can break into are exponential ~\citep{wiki:xxx}.
Fortunately, we can safely ignore large connected components without significant loss in estimation for two reasons.
First, in practical entity resolution tasks, when $M$ is large and contains at least one string-valued feature, it is observed that \emph{most} entities are replicated no more than three or four times. Second,  a large clique can only induce large errors if it is broken into many connected components due to undersampling. According to \cite{erdos1960evolution}, it will almost surely stay connected if $p$ is high, which is the case with our sampling method.

{\bf Assumption:} As argued above, we safely assume that the cliques of sizes equal to or larger than 4 in the original graph would retain their structures, i.e., $\forall i \ge 4, \ n_i^* = n_i'$. With this assumption, we can write down the formula for estimating $n_1^*$, $n_2^*$, $n_3^*$ by solving Equations~\ref{n3}--\ref{n1} as,
\begin{align}
&n_3^*= \frac{\mathbb{E}[n_3']}{p^2 \cdot (3-2p) }, \quad n_2^* = \frac{\mathbb{E}[n_2'] - n_3^*\cdot (3\cdot(1-p)^2\cdot p)}{p}  \\
&n_1^* = \mathbb{E}[n_1'] - n_2^* \cdot(2\cdot(1-p)) - n_3^* \cdot (3\cdot (1-p)^2)
\end{align}
It directly follows that our estimator, which we call the Locality Sensitive Hashing Estimator (LSHE) for the number of connected components is given by
\begin{align}
\label{main}
\text{LSHE} &= n_1' +  n_2' \cdot \frac{2p-1}{p}  + n_3' \cdot \frac{1-6 \cdot (1-p)^2 \cdot p}{p^2 \cdot (3-2p) }
+ \sum_{i=4}^{M}n_i'.
\end{align}

\subsection{Optimality Properties of LSHE}
We now prove two properties of our unique entity estimator, namely, that it is unbiased and that is has provably lower variance than random sampling approaches.

\begin{theorem}
	\label{thm:thm1}
	Assuming  $\forall i \ge 4, \ n_i^* = n_i'$, we have
	\begin{align}\label{eq:thm1}
	\mathbb{E}[\text{LSHE}] &=  n   \ \ \ \mbox{unbiased} \\
	\label{var}
	\mathbb{V}[\text{LSHE}] &= n_3^* \cdot \frac{(p-1)^2 \cdot (3p^2-p+1)}{p^2 \cdot (3-2p)} + n_2^*\frac{(1-p)}{p}
	\end{align}
	The above estimator is unbiased and the variance is given by Equation \ref{var}.
\end{theorem}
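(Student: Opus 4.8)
The plan is to treat LSHE as a linear functional of the observed counts $(n_1', n_2', n_3', n_4', \dots)$ and to exploit the linearity of expectation together with the decomposition of each original clique into its sub-sampled pieces. For unbiasedness, I would start from the three linear equations \eqref{n3}--\eqref{n1} relating $\mathbb{E}[n_i']$ to the $n_i^*$, substitute them into the definition \eqref{main} of LSHE, and check that the coefficients were precisely engineered so that $\mathbb{E}[n_1' + n_2'\cdot\frac{2p-1}{p} + n_3'\cdot\frac{1-6(1-p)^2 p}{p^2(3-2p)}] = n_1^* + n_2^* + n_3^*$. Since the assumption $n_i^* = n_i'$ for $i\ge 4$ makes $\sum_{i\ge 4} n_i'$ deterministic and equal to $\sum_{i\ge 4} n_i^*$, summing gives $\mathbb{E}[\text{LSHE}] = \sum_i n_i^* = n$, the total number of connected components (unique entities). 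This part is just bookkeeping; the algebra is routine because the estimator was constructed by inverting exactly these equations.

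For the variance, the key structural observation is that the randomness in $G'$ comes from independently sub-sampling each of the $n$ original cliques: singletons contribute nothing random, each isolated edge independently becomes an edge (prob.\ $p$) or two singletons (prob.\ $1-p$), each triangle independently lands in one of the configurations in Figure~\ref{p}, and every clique of size $\ge 4$ is assumed to stay intact. Because these are independent across the $n_2^* + n_3^*$ relevant components (plus $n_1^*$ trivial ones), the vector $(n_1', n_2', n_3')$ is a sum of independent contributions, and $\mathbb{V}[\text{LSHE}]$ decomposes as a sum over components. I would therefore compute, for a single isolated edge, the variance of its contribution to LSHE $= n_1' + \frac{2p-1}{p} n_2'$ — it contributes $(2, 0)$ to $(n_1', n_2')$ w.p.\ $1-p$ and $(0,1)$ w.p.\ $p$, giving contribution value $2(1-p) + 0$... more precisely the random contribution is $2\cdot\mathbf{1}[\text{split}] + \frac{2p-1}{p}\mathbf{1}[\text{stays}]$, whose variance is a simple Bernoulli computation yielding the $\frac{1-p}{p}$ term. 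Similarly, for a single triangle I would enumerate the three outcomes with their probabilities $p^2(3-2p)$, $3(1-p)^2 p$, $3(1-p)^2$ (wait — these must sum to $1$, so I would first double-check the Figure~\ref{p} probabilities), read off the contribution of each outcome to $n_1' + \frac{2p-1}{p} n_2' + \frac{1-6(1-p)^2 p}{p^2(3-2p)} n_3'$, and compute the variance of that three-valued random variable; this should collapse to $\frac{(p-1)^2(3p^2-p+1)}{p^2(3-2p)}$. Multiplying by $n_2^*$ and $n_3^*$ respectively and adding gives \eqref{var}.

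The main obstacle I anticipate is the triangle variance computation: one must correctly tabulate the full outcome space of a decomposing triangle (it can become a triangle, an edge-plus-singleton in three ways, or three singletons — and I should verify whether $p^2(3-2p) + 3(1-p)^2 p + 3(1-p)^2$ actually equals $1$, since as written the third term $3(1-p)^2$ looks too large and may need to be $(1-p)^3$), then evaluate each outcome's LSHE contribution using the coefficients $1, \frac{2p-1}{p}, \frac{1-6(1-p)^2 p}{p^2(3-2p)}$, and simplify $\mathbb{E}[X^2] - (\mathbb{E}[X])^2$. This is several lines of polynomial algebra where sign errors and a miscounted multiplicity are the real risks; everything else (linearity, independence across components, the singleton and edge cases) is straightforward. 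I would also note explicitly that cross-covariance terms vanish by independence of the per-clique sub-sampling, which is what licenses writing the variance as a clean sum $n_2^*(\cdots) + n_3^*(\cdots)$ with no interaction term.
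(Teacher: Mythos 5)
Your proposal is correct, and for the unbiasedness part it is the same argument as the paper's: write the expectations of $n_1',n_2',n_3'$ via the clique-decomposition equations, substitute into Equation~\ref{main}, and use linearity plus the assumption $n_i'=n_i^*$ for $i\ge 4$. For the variance, however, you organize the computation differently. The paper expands LSHE as a weighted combination of four indicator sums (one indicator per 2-clique for survival, and three mutually exclusive indicators per 3-clique for the outcomes ``stays a 3-component,'' ``breaks to edge plus singleton,'' ``breaks to three singletons''), then computes the variance of each sum and, crucially, the three covariance terms between the triangle-indicator sums, which are nonzero because the indicators for the same triangle are mutually exclusive. Your route instead treats each original component as contributing a single random variable to LSHE (a two-valued variable for an isolated edge, a three-valued variable for a triangle, a constant for singletons and cliques of size $\ge 4$), and sums per-component variances using independence across components; within a component there is nothing left to correlate, so no covariance bookkeeping is needed. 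Both give the same answer (your per-edge computation yields $\tfrac{1-p}{p}$ and the per-triangle categorical variance collapses to $\tfrac{(p-1)^2(3p^2-p+1)}{p^2(3-2p)}$, matching Equation~\ref{var}), but your decomposition is arguably cleaner and less error-prone than the paper's indicator-type-plus-covariance expansion. One point you flagged resolves in your favor: the triangle outcome probabilities are indeed $p^2(3-2p)$, $3p(1-p)^2$, and $(1-p)^3$, which sum to $1$; the $3(1-p)^2$ appearing in the paper's Equation~\ref{n1} is not an outcome probability but the expected number of singletons a triangle produces, namely $3p(1-p)^2 + 3(1-p)^3 = 3(1-p)^2$, so there is no inconsistency with the paper. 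Note also that both your argument and the paper's rely on the same implicit assumption that distinct components (and distinct edges) are sampled independently with probability $p$.
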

Theorem \ref{thm:thm2} proves the variance of our estimator is monotonically decreasing with $p$.
\begin{theorem}
	\label{thm:thm2}
	$\mathbb{V}[\text{LSHE}]$ is monotonically decreasing when $p$ increases in  range $(0,1]$.
\end{theorem}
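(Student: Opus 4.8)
The plan is to reduce the claim to a sign analysis of two explicit one-variable functions. By the variance formula in Theorem~\ref{thm:thm1} (Equation~\ref{var}) we may write
\[
\mathbb{V}[\text{LSHE}] \;=\; n_3^*\,\phi_3(p) + n_2^*\,\phi_2(p),\qquad
\phi_2(p) = \frac{1-p}{p},\qquad
\phi_3(p) = \frac{(p-1)^2(3p^2-p+1)}{p^2(3-2p)},
\]
where $n_2^*$ and $n_3^*$ are fixed, non-negative clique counts. So it suffices to prove that each of $\phi_2$ and $\phi_3$ is strictly decreasing on $(0,1]$; then $\mathbb{V}[\text{LSHE}]$, being a non-negative combination of such functions, is itself decreasing. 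The first is immediate: $\phi_2(p) = \tfrac1p - 1$ is strictly decreasing. All the work is in $\phi_3$.

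For $\phi_3$ I would not differentiate the rational expression directly, since its derivative carries a degree-six numerator. Instead I would first record the partial-fraction-type identity
\[
\frac{3p^2-p+1}{p^2(3-2p)} \;=\; \frac{3-p}{9p^2} + \frac{25}{9(3-2p)},
\]
checked by clearing denominators and matching the resulting quadratics, and then multiply by $(p-1)^2=(1-p)^2$ to get
\[
\phi_3(p) \;=\; \frac19\Big(g_1(p)+g_2(p)\Big),\qquad
g_1(p)=\frac{(1-p)^2(3-p)}{p^2},\qquad g_2(p)=\frac{25(1-p)^2}{3-2p}.
\]
Next I would verify that $g_1$ and $g_2$ are each strictly decreasing on $(0,1)$ (and continuous up to $p=1$, where both vanish, matching $\phi_3(1)=0$): expanding $g_1(p)=3p^{-2}-7p^{-1}+5-p$ gives $g_1'(p)=\frac{-(p-1)(p-2)(p+3)}{p^3}$, which is negative on $(0,1)$ since $(p-1)<0$, $(p-2)<0$, $(p+3)>0$; and a short computation gives $g_2'(p)=\frac{50(1-p)(p-2)}{(3-2p)^2}$, negative on $(0,1)$ since $(1-p)>0$ and $(p-2)<0$. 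Hence $\phi_3$ is strictly decreasing on $(0,1]$, which finishes the proof.

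The only genuine obstacle is getting to a decomposition of $\phi_3$ whose pieces are easy to differentiate; the partial-fraction split above is precisely what turns an otherwise messy degree-six sign check into two one-line arguments. An equivalent alternative, for anyone who prefers the raw formula, is the substitution $q=1-p$: one finds $\frac{d}{dq}\phi_3 = \frac{3q(q-1)(q-2)(2q^3-q+1)}{(2q^3-3q^2+1)^2}$, and since $2q^3-q+1>0$ on $[0,1]$ (its only interior critical point $q=1/\sqrt{6}$ yields a positive value), the numerator is a product of terms with known signs and is positive on $(0,1)$; thus $\phi_3$ increases in $q$, i.e.\ decreases in $p$. Either route reduces the theorem to elementary polynomial sign analysis and nothing more.
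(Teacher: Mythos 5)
Your argument is correct, and it takes a genuinely different route from the paper's. The paper proves the result (Lemma~\ref{lemma}) by differentiating the variance in Equation~\ref{var} head-on and factoring the derivative: it obtains $\frac{d\,\mathbb{V}[\text{LSHE}]}{dp}=\frac{3(2-p)(p-1)(p+1)\left((p-1)^2+p^2\right)}{p^3(2p-3)^2}\,n_3^*-\frac{1}{p^2}\,n_2^*$ (the paper misprints the last term as $-p^2\cdot n_2^*$, which does not affect the sign argument) and then reads the sign off the factors, $(p-1)$ being the only negative one on $(0,1]$. You instead avoid differentiating $\phi_3$ directly by splitting it, via the partial-fraction identity $\frac{3p^2-p+1}{p^2(3-2p)}=\frac{3-p}{9p^2}+\frac{25}{9(3-2p)}$ (which checks out), into $\frac{1}{9}(g_1+g_2)$ with each of $g_1,g_2$ strictly decreasing by a one-line derivative computation; what this buys is that no degree-six numerator ever needs to be factored, at the cost of having to produce and verify the decomposition. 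Your alternative route with $q=1-p$ is essentially the paper's computation in disguise: your cubic factor $2q^3-q+1$ equals $(q+1)\left((q-1)^2+q^2\right)$, which is exactly the factor the paper exhibits (in the variable $p$), and that factorization makes its positivity immediate, sparing the critical-point check you perform. One shared loose end, present in the paper as well: if $n_2^*=n_3^*=0$ the variance is identically zero, so ``monotonically decreasing'' must be read in the weak (non-increasing) sense; it is strict whenever $n_2^*+n_3^*>0$.
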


The proof of Theorem \ref{thm:thm2} directly follows from the following Lemma \ref{lemma}.
\begin{lemma}
	\label{lemma}
	First order derivative of $\mathbb{V}[\text{LSHE}]$ is negative when $p \in (0,1]$.
\end{lemma}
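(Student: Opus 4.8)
\textbf{Proof proposal for Lemma \ref{lemma}.}

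The plan is to write $\mathbb{V}[\text{LSHE}]$ explicitly as a function of the single variable $p$, treating $n_2^*$ and $n_3^*$ as nonnegative constants, and then show that its derivative is negative on $(0,1]$ by showing that the derivative of each of the two summands is negative there. Concretely, set
\begin{align*}
f(p) &= \frac{(p-1)^2(3p^2-p+1)}{p^2(3-2p)}, \qquad g(p) = \frac{1-p}{p},
\end{align*}
so that $\mathbb{V}[\text{LSHE}] = n_3^* f(p) + n_2^* g(p)$. Since $n_2^*, n_3^* \ge 0$, it suffices to prove $f'(p) < 0$ and $g'(p) < 0$ for $p \in (0,1]$. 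The term $g$ is trivial: $g'(p) = -1/p^2 < 0$. So the real content is the monotonicity of $f$.

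For $f$, I would compute $f'(p)$ via the quotient rule and reduce the sign question to the sign of a single polynomial. Writing $f = N/D$ with $N(p) = (p-1)^2(3p^2-p+1)$ and $D(p) = p^2(3-2p) = 3p^2 - 2p^3$, we have $f' = (N'D - ND')/D^2$, and since $D^2 > 0$ on $(0,1]$, the sign of $f'$ equals the sign of the polynomial $P(p) := N'(p)D(p) - N(p)D'(p)$. The plan is to expand $P$ into a polynomial in $p$ and show $P(p) < 0$ for all $p \in (0,1]$. One clean way to exhibit negativity on this interval is to substitute $p = 1 - t$ with $t \in [0,1)$ and show that the resulting polynomial in $t$ has all nonpositive coefficients (and is not identically zero), or alternatively to factor out an obvious root. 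Note $N$ has a double root at $p=1$, so $P$ should vanish to appropriate order at $p=1$ as well — in fact $f(1) = 0$ and one expects $f$ to approach $0$ from above as $p \to 1^-$, consistent with $f' < 0$. I would factor $P(p) = -(1-p)\,\tilde P(p)$ (pulling out the root at $p=1$) and then argue $\tilde P(p) > 0$ on $(0,1]$, again most easily by the $p = 1-t$ substitution and checking coefficient signs, or by noting $3p^2 - p + 1 > 0$ everywhere (its discriminant is $1 - 12 < 0$) and bounding the remaining pieces.

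The main obstacle is purely the bookkeeping of the polynomial expansion: $P(p)$ is a degree-$5$ polynomial, and one must expand $N$, $N'$, $D$, $D'$ carefully, combine them, and then verify a definite-sign claim on $(0,1]$. There is no conceptual difficulty — the statement is a one-variable calculus fact — but a sign error in the expansion would be fatal, so I would double-check the expansion by evaluating $P$ and $f'$ numerically at, say, $p = 1/2$ and $p = 1$ (where $f'$ should be strictly negative and $P(1) = 0$, respectively). Once $P(p) < 0$ on $(0,1]$ is established, $f'(p) < 0$ follows immediately, hence $\frac{d}{dp}\mathbb{V}[\text{LSHE}] = n_3^* f'(p) + n_2^* g'(p) < 0$ whenever $(n_2^*, n_3^*) \neq (0,0)$, which completes the proof of the lemma and therefore, via the sign of the derivative, of Theorem \ref{thm:thm2}.
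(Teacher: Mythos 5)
Your proposal is correct and follows essentially the same route as the paper: differentiate the closed-form variance term by term and check signs, with $g'(p) = -1/p^2 < 0$ (the paper's ``$-p^2 \cdot n_2^*$'' is a typo for $-n_2^*/p^2$) and the quotient-rule numerator for $f$ factoring as $3p(2-p)(p-1)(p+1)\bigl((p-1)^2+p^2\bigr)$, which is exactly the factored derivative the paper displays for the $n_3^*$ term and is negative on $(0,1)$. The only difference is that the paper states this factorization outright while you defer the expansion as bookkeeping; carrying it out confirms your plan (with the trivial caveat, shared with the paper, that the $n_3^*$ contribution vanishes rather than being strictly negative at $p=1$).
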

Note that when $p=1$, $\mathbb{V}[\text{LSHE}] = 0$ which means the observed graph $G'$ is exactly the same as $G^*$. For detailed proofs of unbiasedness and Lemma \ref{lemma}, see Appendix \ref{sec:app}.

\subsection{Adaptive Sampling versus Random Sampling}
\label{sec:randVsAdaptive}
Before we describe our adaptive sampler, we briefly quantify the advantages of an adaptive sampling over random sampling for the Syrian data set by computing the differences between their variances. Let $p$ be the probability that an edge (correct match) is sampled. On the Syrian data set, our proposed sampler, described in next section, empirically achieves $p = 0.83$, by reporting around 450,000 sampled pairs ($O(M)$) out of the 63 billion possibilities ($O(M^2)$). Substituting this value of $p$, the corresponding variance can be calculated from Equation~\ref{var} as $$n_3^* \cdot 0.07 + n_2^*\cdot 0.204.$$

Turning to plain random sampling of edges, in order to achieve the same sample size above leads to $p$ as low as $\frac{4.5 \times 10^5}{6.3 \times 10^{10}} \simeq 6.9 \times 10^{-6}$. With such minuscule $p$, the resulting variance is $$n_3^* \cdot 6954620166 + n_2^*\cdot 144443.$$ Thus, the variance for random sampling is roughly $7 \times 10^{5}$ times the number of duplicates in the data set and $1 \times 10^{11}$ the number of triplets in the data set.

In section~\ref{experiments}, we illustrate that two other random sampling based algorithms of~\cite{chazelle2005approximating} and ~\cite{1978paper} also have poor accuracy compared to our proposed estimator. The poor performance of random sampling is not surprising from a theoretical perspective, and illustrates a major weakness empirically for the task of unique entity estimation with sparse graphs, where adaptive sampling is significantly advantageous.

\subsection{The Missing Ingredient: (K,L)-LSH Algorithm}
\label{sec:missingVariant}
Our proposed methodology, for unique entity estimation, assumes that we have an efficient algorithm that adaptively samples a set of record pairs, in sub-quadratic time. In this section, we argue that using a variant of LSH (Section~\ref{sec:lsh}) we can construct such an efficient sampler.

As already noted, we do not make any modeling assumptions on the generation process of the duplicate records. Also, we cannot assume that there is a fixed similarity threshold, because in real datasets duplicates can have arbitrarily large similarity. Instead, we rely on the observation that record pairs with high similarity have a higher chance of being duplicate records.  That is, we assume that when two  entities $R_i$ and $R_j$ are similar in their attributes, it is more likely that they refer to the same entities~\citep{christen_2012}.\footnote{The similarity metric that we use to compare sets of record strings is the Jaccard similarity.} We note that this probabilistic observation is the weakest possible assumption, and almost always true for entity resolution tasks because linking records by a similarity score is one simple way of approaching entity resolution \citep{christen_2012, winkler_2006, fellegi_1969}.

The similarity between entities (records) naturally gives us a notion of adaptiveness. One simple adaptive approach is to sample records pairs with probability proportional to their similarity. However, as a prerequisite for such sampling, we must compute all the pairwise similarities and associated probability values with every edge. Computing such a pairwise similarity score is a quadratic operation ($O(M^2)$) and is intractable for large datasets. Fortunately, recent work has shown that~\citep{spring2017scalable,spring2017new,luo2017Arrays} it is possible to sample pairs adaptively in proportion to the similarity in provably sub-quadratic time using LSH, which we describe in the next section.

\subsubsection{(K,L)-LSH Algorithm and Sub-quadratic Adaptive Sampling}
\label{sec:KLParametrizedLSH}

We leverage a very recent observation associated with the traditional $(K, L)$ parameterized LSH algorithm. The $(K, L)$ parameterized LSH algorithm is a popular similarity search algorithm, which given a query $q$, retrieves element $x$ from a preprocessed data set in sub-linear time ($O(KL) \ll M$) with probability $1-(1-\mathcal{J}(q,\ x)^K)^L$.  Here, $J$ denotes the Jaccard similarity between the query and the retrieved data vector $x$. Our proposed method leverages this $(K, L)$-parameterized LSH Algorithm, and we briefly describe the algorithm in this section. For complete details refer to~\cite{Report:E2LSH}.

Before we proceed, we define hash maps and keys. We use hash maps, where every integer (or key) is associated with a bucket (or a list) of records. In a hash map, searching for the bucket corresponding to a key is a constant time operation. Please refer to algorithms literature~\citep{rajaraman_2012} for details on hashing and its computational complexity. Our algorithm will require several hash maps, $L$ of them, where a record $R_i$ is associated with a unique bucket in every hash map. The key corresponding to this bucket is determined by minwise hashes of the record $R_i$. We encourage readers to refer to~\cite{Report:E2LSH} for implementation details.

More precisely, let $h_{ij}$, $i = \{1, \ 2,...,\ L\}$ and $j = \{1, \ 2,...,\ K\}$ be $K \times L$ minwise hash functions (Equation~\ref{eq:MinHash}) with each minwise hash function formed by independently choosing the underlying permutation $\pi$. Next, we construct $L$ meta-hash functions (or the keys) $H_i = \{h_{i,1}, h_{i,2},...,h_{i,K}\}$, where each of the $H_i$'s is formed by combining $K$ different minwise hash functions. For this variant of the algorithm, we need a total of $K \times L$ functions. With such $L$ meta-hash functions, the algorithm has two main phases, namely the data pre-processing and the sampling pairs phases, which we outline below.

\begin{itemize}
  \item {\bf Data Preprocessing Phase:} We create $L$ different hash maps (or hash tables), where every hash values maps to a bucket of elements.  For every record $R_i$ in the dataset, we insert $R_j$ in the bucket associated  with the key $H_i(R_j)$, in hash map $i = \{1, \ 2, ..., \ L\}$. To assign $K$-tuples $H_i$ (meta-hash) to a number in a fixed range, we use some universal random mapping function to the desired address range. See~\cite{Report:E2LSH, 2017arXiv170901190W} for details.
  \item {\bf Sample Pair Reporting:} \textcolor{black}{For every record $R_j$ in the dataset and from each table $i$, we obtain all the elements in the bucket associated with key $H_i(R_j)$, where $i = \{1,\ 2,...,\ L\}$. We then take the union of the $L$ buckets obtained from the $L$ hash tables, and denote this (aggregated) set by $A.$ We finally, report pairs of records $(R_i, R_j)$, where $R \in A$.}
\end{itemize}

\begin{theorem}
The \textcolor{black}{(K,L)-LSH Algorithm} reports a pair $(R_i,\ R_j)$ with probability $1 - (1-\mathcal{J}(R_i,\ R_j)^K)^L$, where $\mathcal{J}(R_i,\ R_j)$ is the Jaccard Similarity between record pairs $(R_i,\ R_j).$
\end{theorem}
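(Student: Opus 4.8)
The plan is to compute, for a fixed pair of records $(R_i, R_j)$, the probability that the pair is reported by the algorithm, by tracking when the two records collide in a single hash map and then combining over the $L$ independent hash maps. First I would fix attention on one hash map, say map number $\ell$, whose key is the meta-hash $H_\ell = (h_{\ell,1}, \dots, h_{\ell,K})$. The pair $(R_i, R_j)$ is examined together within map $\ell$ precisely when $R_j$ lands in the bucket keyed by $H_\ell(R_i)$ (equivalently, $R_i$ and $R_j$ share the same bucket in map $\ell$), and — modulo the universal mapping to the address range, which I would treat as collision-free for distinct keys, or note that it only adds negligible spurious collisions — this happens exactly when $H_\ell(R_i) = H_\ell(R_j)$, i.e. $h_{\ell,k}(R_i) = h_{\ell,k}(R_j)$ for all $k = 1, \dots, K$.

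Next I would invoke the minhash property, Equation~\eqref{eq:MinHash}: for each individual minwise hash, $\Pr(h_{\ell,k}(R_i) = h_{\ell,k}(R_j)) = \mathcal{J}(R_i, R_j)$, where the probability is over the random permutation defining $h_{\ell,k}$. Since the $K$ minwise hash functions making up $H_\ell$ are formed from independently chosen permutations, the $K$ collision events are mutually independent, so $\Pr(H_\ell(R_i) = H_\ell(R_j)) = \mathcal{J}(R_i, R_j)^K$. Hence the probability that the pair is \emph{not} reported from map $\ell$ is $1 - \mathcal{J}(R_i, R_j)^K$. The $L$ hash maps are built from disjoint, independently chosen families of permutations, so the events ``not reported from map $\ell$'' for $\ell = 1, \dots, L$ are independent; multiplying gives probability $\bigl(1 - \mathcal{J}(R_i, R_j)^K\bigr)^L$ that the pair is reported from no map at all. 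The pair $(R_i, R_j)$ is reported by the algorithm iff it is reported from at least one map (it appears together in the aggregated set $A$ whenever the two records co-occur in some bucket), so the reporting probability is $1 - \bigl(1 - \mathcal{J}(R_i, R_j)^K\bigr)^L$, as claimed.

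The main obstacle is making the reduction ``examined together in map $\ell$ $\iff$ $H_\ell(R_i) = H_\ell(R_j)$'' fully rigorous in light of the universal random mapping that compresses the $K$-tuple $H_\ell$ into a bounded address range: distinct meta-hash values can in principle collide in the address table, which would make the true reporting probability slightly larger than the stated expression. I would handle this either by assuming the address range is large enough (and the mapping chosen) so that this event has negligible probability, as is standard in the LSH literature, or by citing \cite{Report:E2LSH} for the precise treatment; in any case the inequality $\Pr(\text{reported}) \ge 1 - (1 - \mathcal{J}^K)^L$ always holds, and equality holds under the idealized hashing assumption. A secondary, purely bookkeeping point is to confirm that ``take the union of the $L$ buckets and report all pairs within $A$'' indeed reports $(R_i,R_j)$ exactly when the two records share a bucket in at least one of the $L$ maps — which is immediate from the construction — rather than requiring, say, co-occurrence in a fixed single map.
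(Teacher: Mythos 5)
Your argument is correct and follows essentially the same route as the paper's proof: the per-table collision probability $\mathcal{J}(R_i,R_j)^K$ from the independence of the $K$ minwise hashes, then independence across the $L$ tables to get the miss probability $(1-\mathcal{J}(R_i,R_j)^K)^L$, and finally the complement. Your extra remark about spurious collisions from the universal mapping into the address range is a point the paper silently idealizes away (deferring to the cited LSH implementation reference), and it does not change the argument.
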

{\bf Proof:} Since all the minwise hashes are independent due to an independent sampling of permutations, the probability that both $R_i$ and $R_j$ belong to the same bucket in any hash table $i$ is $\mathcal{J}(R_i,\ R_j)^K$. Note from equation~\ref{eq:MinHash}, each meta-hash agreement has probability $\mathcal{J}(R_i,\ R_j)$. Therefore, the probability that pair $(R_i, \ R_j)$ is missed by all the $L$ tables is precisely $(1-\mathcal{J}(R_i,\ R_j)^K)^L$, and thus, the required probability of successful retreival is the complement.

The probabilistic expression $1 - (1-\mathcal{J}(R_i,\ R_j)^K)^L$ is a monotonic function of the underlying similarity $Sim(q,y)$ associated with the LSH. In particular, higher similarity pairs have more chance of being retrieved. Thus, LSH provides the required sampling that is adaptive in similarity and is sub-quadratic in running time.

\subsubsection{Computational Complexity}
\label{sec:CC}
The computational compexity for sampling with $M$ records is $O(MKL)$. The procedure requires computing $KL$ minwise hashes for each record. This step is followed by adding every record to $L$ hash tables.  Finally, for each record, we aggregate $L$ buckets to form sample pairs. The result of monotonicity and adaptivity of the samples applies to any value of $K$ and $L$. We choose $O(K \times L) \ll O(M)$ such that we are able to get samples in sub-quadratic time. We further tune $K$ and $L$ using cross-validation to limit the size of our samples. In section~\ref{sec:parameterstudy}, we evaluate the effect of varying $K$ and $L$ in terms of the recall and reduction ratio. (For a review of the recall and reduction ratio, we refer to \cite{christen_2012}.) We address the precision at the very end of our experimental procedure to ensure that the recall, reduction ratio, and precision of our proposed unique entity estimation procedure are all as close to 1 as possible while ensuring that the entire algorithm is computationally efficient. For example, on the Syrian data set, we can generate 450,000 samples in less than $127 \sec$ with an adaptive sampling probability (recall) $p$ as high as $0.83$.  On the other hand, computing all pairwise similarities (63 billion) takes more than 8 days on the same machine with 28 cores capable of running 56 threads in parallel.

Next, we describe how this LSH sampler is related to the adaptive sampler described earlier in  Section~\ref{analysis}.

\subsubsection{Underlying Assumptions and Connections with $p$}
Recall that we can efficiently sample record pairs $R_i, \ R_j$ with probability $1-(1-J(R_i,\ R_j)^K)^L.$
Since we are not making any modeling assumptions, we cannot directly link this probability to $p$, the probability of sampling the right duplicated pair (or linked entities) as required by our estimator $\text{LSHE}$. In the absence of any knowledge,
we can get the estimate of $p$ using a small set of labeled linked pairs $\mathcal{L}$. Specifically, we
we can estimate the value of $p$ by counting the fraction of matched pairs (true edges) from $\mathcal{L}$ reported by the sampling process.

Note that in practice there is no similarity threshold $\theta$ that guarantees that two record pairs are duplicate records. That is, it is difficult in practice to know a fixed $\theta$ where $\mathcal{J}(R_i,\ R_j) \ge \theta$ ensures that $R_i$ and $R_j$ are the same entities. However, the weakest possible and reasonable assumption is that high similarity pairs (textual similarity of records) should have higher chances of being duplicate records than lower similarity pairs.

Formally, this assumption implies that there exists a monotonic function $f$ of similarity $\mathcal{J}(R_i,\ R_j)$ such that the probability of any $R_i, \ R_j$ being a duplicate record is given by $f(\mathcal{J}(R_i,\ R_j))$.  Since our sampling probability $1-(1-\mathcal{J}(R_i,\ R_j)^K)^L$ is also a monotonic function of $\mathcal{J}(R_i,\ R_j)$, we can also write $$f(\mathcal{J}(R_i,\ R_j)) = g(1-(1-\mathcal{J}(R_i,\ R_j)^K)^L),$$ where $g$ is $f$ composed with $h^{-1}$ which is the inverse of $h(x) = 1-(1-x^K)^L$. Unfortunately, we do not know the form of $f$ or $g$.

Instead of deriving $g$ (or $f$), which requires additional implicit assumptions on the form of the functions, our process estimates $p$ directly. In particular, the estimated value of $p$ is a data dependent mean-field approximation of $g$, or  rather, $$p = \mathbb{E}[g(1-(1-\mathcal{J}(R_i,\ R_j)^K)^L)].$$ Crucially, our estimation procedure does not require any modeling assumptions regarding the generation process of the duplicate records, which is significant for noisy data sets, where such assumptions typically break.


\subsubsection{Why LSH?}
\label{sec:WHYLSH}
Although there are several rule-based blocking methodologies, LSH is the only one that is also a random adaptive sampler. In particular, consider a rule-based blocking mechanism, for example on the Syrian data set, which might block on the date of death feature. Such blocking could be a very reasonable strategy for finding candidate pairs. Note that it is still very likely that duplicate records can have different dates of death because the information could be different or misrepresented.  In addition, such a blocking method is deterministic, and different independent runs of the blocking algorithm will report the same set of pairs. Even if we find reasonable candidates, we cannot up-sample the linked records to get an unbiased estimate. There will be a systematic bias in the estimates, which does not have any reasonable correction. In fact, random sampling to our knowledge is the only known choice in the existing literature for an unbiased estimation procedure; however, as already mentioned, random uninformative sampling is likely to be very inaccurate.

LSH, on the other hand, can also be used as a blocking mechanism~\citep{steorts14comparison}. It is, however, more than just a blocking scheme; it is a provably adaptive sampler.
Due to randomness in the blocking, different runs of sampler lead to different candidates, unlike deterministic blocking. We can also average over multiple runs to even increase the concentration of our estimates. The adaptive sampling view of LSH has come to light very recently~\citep{spring2017scalable,spring2017new,luo2017Arrays}. With adaptive sampling, we get much sharper unbiased estimators than the random sampling approach. To our knowledge, this is the first study of LSH sampling for unique entity estimation.

\subsection{Putting it all Together: Scalable Unique Entity Estimation}
\label{putit}

We now describe our scalable unique entity estimation algorithm.  As mentioned earlier, assume that we have a data set that contains a text representation of the $M$ records. Suppose that we have a reasonably sized, manually labeled training set $\mathcal{T}$. We will denote the set of sampled pairs of records given by our sampling process as $\mathcal{S}$. Note, each element of $\mathcal{S}$ is a pair. Then our scalable entity resolution algorithm consists of three main steps, with the total computational complexity $O(ML+ KL +|\mathcal{S}| +|\mathcal{T}|)$.  In our case, we will always have $|\mathcal{S}| \ll O(M^2)$ and $KL \ll M$ (in fact, $L$ will be a small constant), which ensures that the total cost is strictly sub-quadratic. The complete procedure is summarized in Algorithm~\ref{Est}.

\begin{enumerate}
    \item {\bf Adaptively Sample Record Pairs ($O(ML)$):} We regard each record $R_i$ as a short string and replace it by an ``n-grams" based representation. Then one computes $K \times L$  minwise hashes of each corresponding string. This can be done in a computationally efficient manner using the DOPH algorithm \citep{shrivastava2017optimal}, which is done in data reading time. Next, once these hashes are obtained, one applies the sampling algorithm described in section~\ref{samplep} in order to generate a large enough sample set, which we denote by $\mathcal{S}$. For each record, the sampling step requires exactly $L$ hash table queries, which are themselves $O(1)$ memory lookups. Therefore, the computational complexity of this step is $O(ML +KL).$
    \item {\bf Query each Sample Pairs:} Given the set of sampled pairs of records $\mathcal{S}$ from Step 1, for every pair of records in $\mathcal{S}$, we query whether these record pairs are a match or non-match. This step requires, $O(|S|)$, queries for the true labels. Here, one can use manually labeled data if it exists. In the absence of manually labeled data, we can also use a supervised algorithm, such as support vector machines or random forests, that is trained on the manually labeled set $\mathcal{T}$ (Section~\ref{syria}).
    \begin{enumerate}
        \item {\bf Estimate $p$:} Given the sampled set of record pairs $\mathcal{S},$ we need to know the value of $p$, the probability that any given correct pair is sampled. To do so, we use the fraction of true pairs sampled from the labeled training set $\mathcal{T}.$ The sampling probability $p$ can be estimated by computing the fraction of the matched pairs of training set records $\mathcal{T}_{match}$ appearing in $\mathcal{S}$. That is, we estimate $p$ (unbiasedly) by $$p = \frac{|\mathcal{T}_{match} \cap \mathcal{S}|}{|\mathcal{T}_{match}|}.$$ If $T$ is stored in a dictionary, then this step can be done on the fly while generating samples. It only costs $O(\mathcal{T})$ extra work to create the dictionary.

    \item {\bf Count Different Connected Components in $G'$ ($O(M+|\mathcal{S}|)$):}
        The resulting matched sampled pairs, after querying every sample for actual (or inferred) labels, form the edges of $G'$. We now have complete information about our sampled graph $G'$. We can now traverse $G'$ and count all sizes of connected components in $G'$ to obtain $n_1'$, $n_2'$, $n_3'$ and so on. Traversing the graph has computational complexity $O(M + |\mathcal{S}|)$ time using Breadth First Search (BFS).
    \end{enumerate}
    \item {\bf Estimate the Number of Connected Components in $G^*$ ($O(1)$):} Given the values of $p$, $n_1'$, $n_2'$, and $n_3'$  we use equation~\ref{main} to compute the unique entity estimator $\text{LSHE}$.
\end{enumerate}

\begin{figure}
	\begin{minipage}{\textwidth}
		\begin{algorithm}[H]
			\caption{LSH-Based Unique Entity Estimation Algorithm}
			\label{Est}
			\begin{algorithmic}[1]
				\STATE {\bfseries Input:} Records $R$, Labeled Set $\mathcal{T}$, Sample Size $m$
				\STATE {\bfseries Output:} $LSHE$
				\STATE $\mathcal{S}$ = $LSH Sampling$($R$) (Section~\ref{sec:KLParametrizedLSH})
                \STATE Get $\mathcal{T}_{match}$ be the linked pairs (duplicate entities) in $\mathcal{T}$
				\STATE $p$ = $\frac{|\mathcal{T}_{match} \cap \mathcal{S}||}{|\mathcal{T}_{match}|}$
                \STATE Query every pair in $\mathcal{S}$ for match/mismatch (get actual labels). (Graph $G'$)
				\STATE $n_1', n_2', n_3'... n_M'$ = $Traverse(G')$
				\STATE $LSHE$ = $Equation$ \ref{main} ($p$ , $n_1', n_2', n_3'... n_M'$)
			\end{algorithmic}
		\end{algorithm}
	\end{minipage}
	\caption{Overview of our proposed unique entity estimation algorithm.}
\label{samplep}
	\end{figure}

\section{Experiments}
\label{experiments}

We evaluate the effectiveness of our proposed methodology on the Syrian data set and three additional real data sets, where the Syrian data set is only partially labeled, while the other three data sets are fully labeled. We first perform evaluations and comparisons on the three fully labeled data sets, and then give an estimate of the documented number of identifiable victims for the Syrian data set.

\looseness=-1
\begin{itemize}
    \item \textbf{Restaurant}: The \textbf{Restaurant} data set contains 864 restaurant records collected from Fodor's and Zagat's restaurant guides.\footnote{Originally provided by Sheila Tejada, downloaded from http://www.cs.utexas.edu/users/ml/riddle/data.html.}  There are a total of 112 duplicate records. Attribute information contains name, address, city, and cuisine.

        \item \textbf{CD}: The \textbf{CD} data set that includes 9,763 CDs randomly extracted from freeDB.\footnote{https://hpi.de/naumann/projects/repeatability/datasets/cd-datasets.html.} There are a total of 299 duplicate records. Attribute information consists of 106 total features such as artist name, title, genre, among others.

    \item \textbf{Voter}: The  \textbf{Voter} data has been scraped and collected by \cite{christen14voter} beginning in October 2011. We work with a subset of this data set containing 324,074 records. There are a total of 68,627 duplicate records. Attribute information contains personal information on voters from North Carolina including full name, age, gender, race, ethnicity, address, zip code, birth place, and phone number.

    \item \textbf{Syria}: The \textbf{Syria} data set comprises data from the Syrian conflict, which covers the same time period, namely, March 2011 -- April 2014. This data set is not publicly available and was provided by HRDAG. The respective data sets come from the Violation Documentation Centre (VDC), Syrian Center for Statistics and Research (CSR-SY), Syrian Network for Human Rights (SNHR), and Syria Shuhada website (SS). Each database lists a different number of recorded victims killed in the Syrian conflict, along with available identifying information including full Arabic name, date of death, death location, and gender.\footnote{These databases include documented identifiable victims and not those who are missing in the conflict. Hence, any estimate reported only refers to the data at hand.}
\end{itemize}

The above datasets cover a wide spectrum of different varieties observed in practice. For each data set, we report summary information in
Table~\ref{table1}.

\begin{table}[t]
	\centering	
	\tiny
	\begin{tabular}{lccccc}
		\toprule
		\textbf{DBname} & \textbf{Domain}& \textbf{Size} & \textbf{\# Matching Pairs} & \textbf{\# Attributes}& \textbf{\# Entities} \\
		\midrule
		Restaurants & Restaurant Guide&864  &112 & 4&752 \\
	
		CD & Music CDs&9,763  &299 & 106 & 9,508\\
	
		Voter & Registration Info&324,074  & 70,359 & 6 & 255,447\\
	
		Syria & Death Records&354,996  & N/A & 6 & N/A\\
		\bottomrule
	\end{tabular}	
	\caption{ We present five important features of the four data sets.
	\textbf{Domain} reflects the variety of the data type we used in the experiments. \textbf{Size} is the number of total records respectively. \textbf{\# Matching Pairs} shows how many pair of records point to the same entity in each data set. \textbf{\# Attributes} represents the dimensionality of individual record. \textbf{\# Entities} is the number of unique records. }
	\label{table1}
\end{table}

\begin{figure}[H]
	\mbox{
		
		\includegraphics[width=\linewidth]{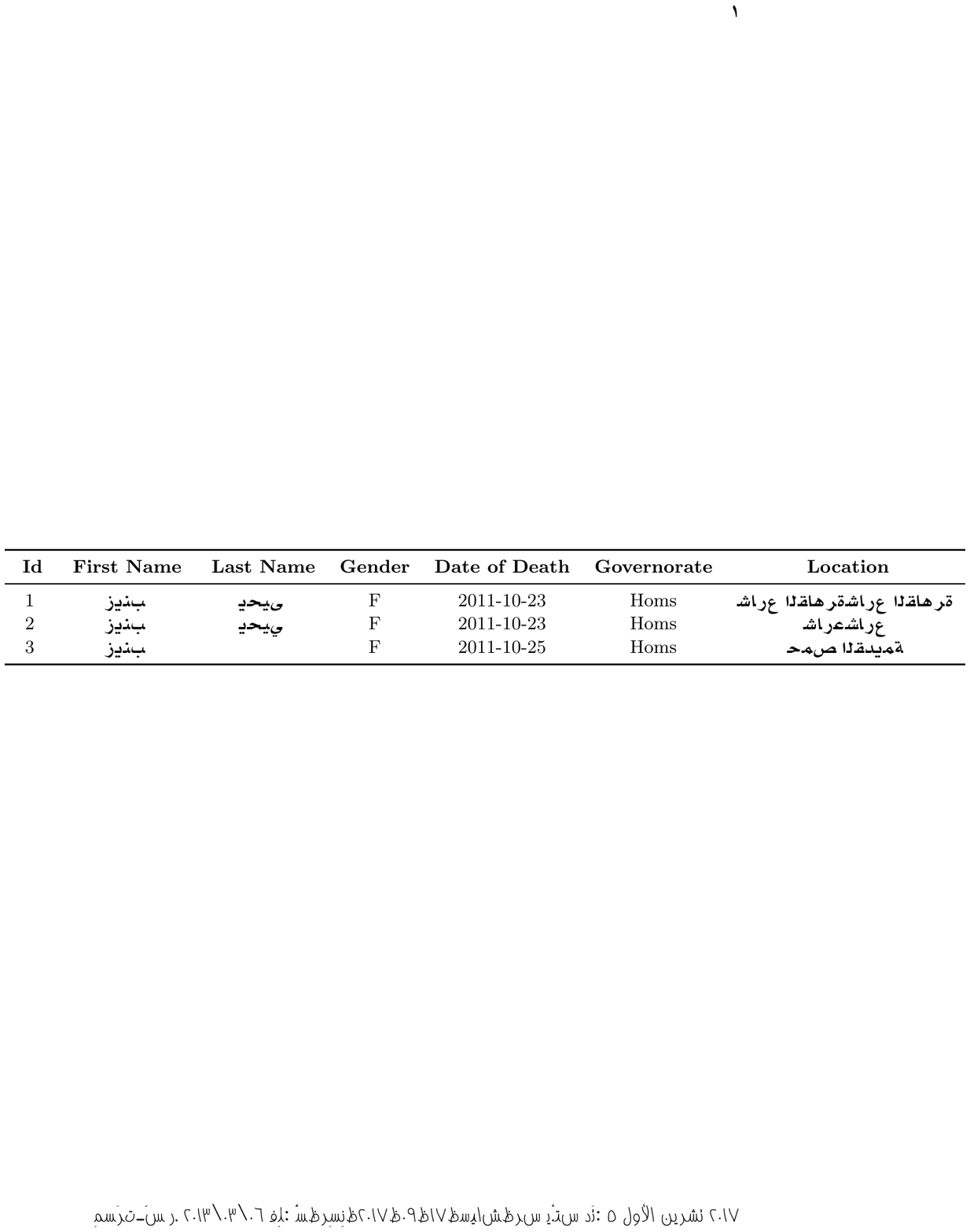}
	}
	\caption{We show several death records in Syrian dataset from VDC, which allows for public access to some of the data. All of the three records belong to the same entity, labeled by human experts. Record 1 and 2 are similar in all attributes while Record 1 and 3 are very different. Due to the variation in the data, records that are very similar are likely to be linked as the same entity, however, it is more difficult to make decisions when records show differences, such as record 1 and 3. This illustrates some of the limitations from deterministic blocking methods discussed in Section \ref{sec:WHYLSH}.}
	\label{fg1}
\end{figure}

\subsection{Evaluation Settings}
In this section, we outline our evaluation settings. We denote Algorithm~\ref{Est} as the LSH Estimator (LSHE). We make comparisons to the non-adaptive variant of our estimator (PRSE), where we use plain random sampling (instead of adaptive sampling). This baseline uses the same procedure as our proposed LSHE, except that the sampling is done uniformly. A comparison with PRSE quantifies the advantages of the proposed adaptive sampling over random sampling. In addition, we implemented the two other known sampling methods, for connected component estimation, proposed in \cite{1978paper} and \cite{chazelle2005approximating}. For convenience, we denote them as Random Sub-Graph based Estimator (RSGE), and BFS on Random Vertex based Estimator (BFSE) respectively. Since the algorithms are based on sampling (adaptive or random), to ensure fairness, we fix a budget $m$ as the number of pairs of vertices considered by the algorithm. Note that any query for an edge is a part of the budget. If the fixed budget is exhausted, then we stop the sampling process and use the corresponding estimate, using all the information available.

We briefly describe the implementation details of the four considered estimators below:
\begin{enumerate}
\item \textbf{LSHE:} In our proposed algorithm, we use the ($K, L$) parameterized LSH algorithm to generate samples of record pairs using Algorithm~\ref{samplep}, where recall $K$ and $L$ control the resulting sample size (section~\ref{sec:parameterstudy}). Given $K, L$ as an input to Algorithm \ref{Est}, we use the sample size as the value of the fixed budget $m$. Table \ref{table2} gives different sample budget sizes (with the corresponding $K$ and $L$) and corresponding values of $p$ for selected samples in three real data sets.
\item \textbf{PRSE:} For a fair comparison, in this algorithm, we randomly sample the same number of record pairs used by LSHE. We then perform the same estimation process as LSHE but instead use $p = \dfrac{2m}{M(M-1)},$ which corresponds to the random sampling probability to get the same number of samples, which is $m$.
\item \textbf{RSGE~\citep{1978paper}:} This algorithm requires performing breadth first search (BFS) on each randomly selected vertices. BFS requires knowing all edges (neighbors) of a node for the next step, which requires $M-1$ edge queries. To ensure the fixed budget $m$, we end the traversal when the number of distinct edge queries reaches the fixed budget $m$.

\item \textbf{BFSE~\citep{chazelle2005approximating}:} This algorithm samples a subgraph and observes it completely. This requires labeling all the pairs of records in the sampled sub-graph. To ensure same budget $m$, the sampled sub-graph has approximately $\sqrt{2m}$ vertices.
\end{enumerate}

\textbf{Remark}: To the best of our knowledge there have been no experimental evaluations of the two algorithms of \cite{1978paper} and \cite{chazelle2005approximating} in the literature. Hence, our results could be of independent interest in themselves.

We compute the relative error (RE), calculated as $$\text{RE} = \dfrac{|\text{LSHE} - n|}{n},$$ for each of the estimators, for different values of the budget $m$. We plot the $\text{RE}$ for each of the estimators, over a range of values of $m$, summarizing the results in Figure~\ref{fg1}.

All the estimators require querying pairs of records compared to labeled ground truth data for whether they are a match or a non-match. As already mentioned, in the absence of full labeled ground truth data, we can use a supervised classifiers such as SVMs as a proxy, assuming at least some small amount of labeled data exists. By training an SVM, we can use this as a proxy for labeled data as well. We use such a proxy in the Syrian data set because we are not able to query every pair of records to determine whether they are true duplicates or not.


We start with the three data sets where fully labelled ground truth data exists. For LSHE, we compute the estimation accuracy using both the supervised SVM (Section~\ref{syria}) as well as using the fully labelled ground truth data. The difference in these two numbers quantifies the loss in estimation accuracy due to the use of the proxy SVM prediction instead of using ground truth labeled data. In our use of SVMs, we take less than 0.01$\%$ of the total number of the possible record pairs as the training set.

\begin{figure}[H]
	\mbox{
		\hspace{-0.1in}
		\includegraphics[width=.36\linewidth]{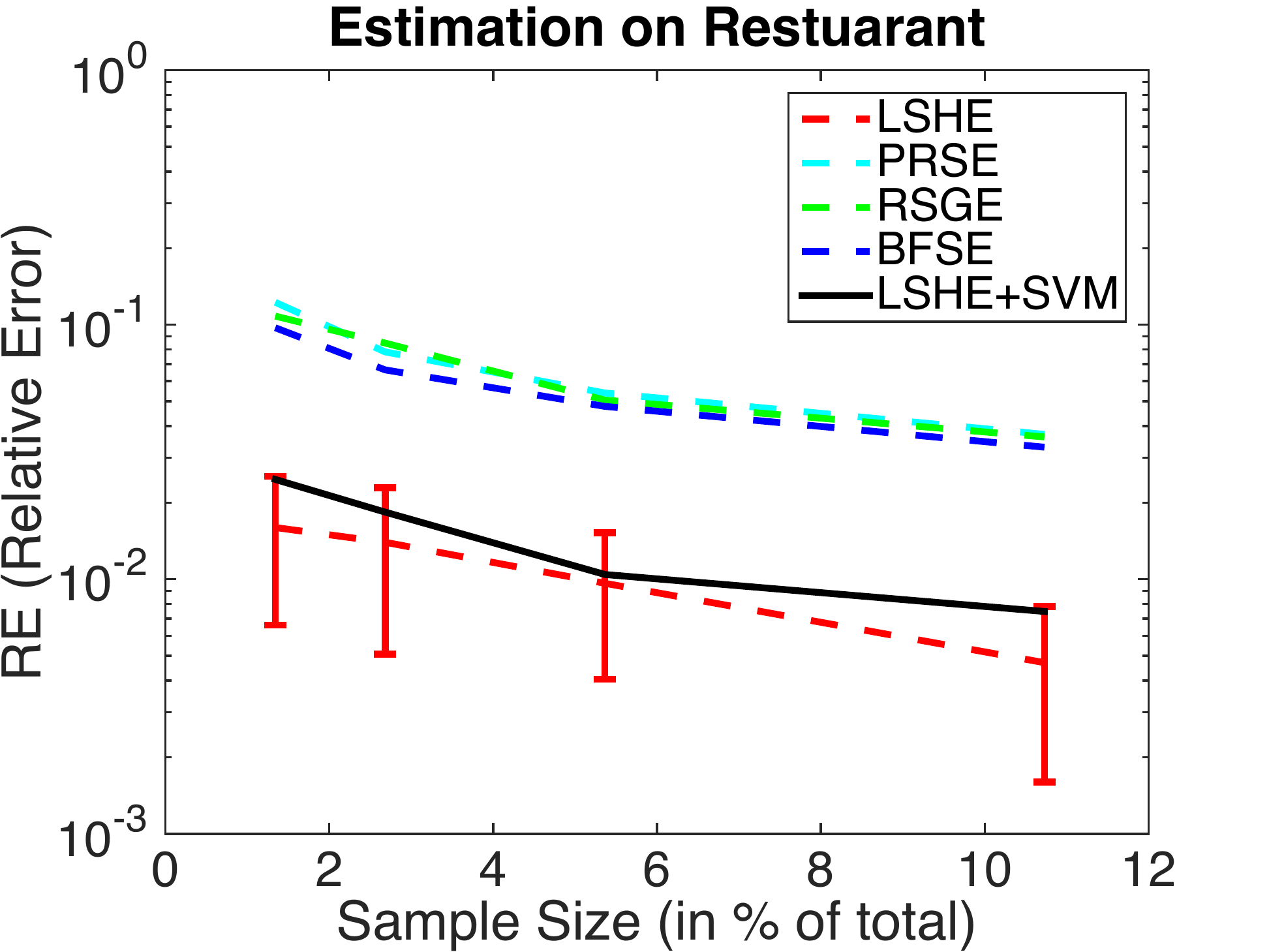}\hspace{-0.18in}
		\includegraphics[width=.36\linewidth]{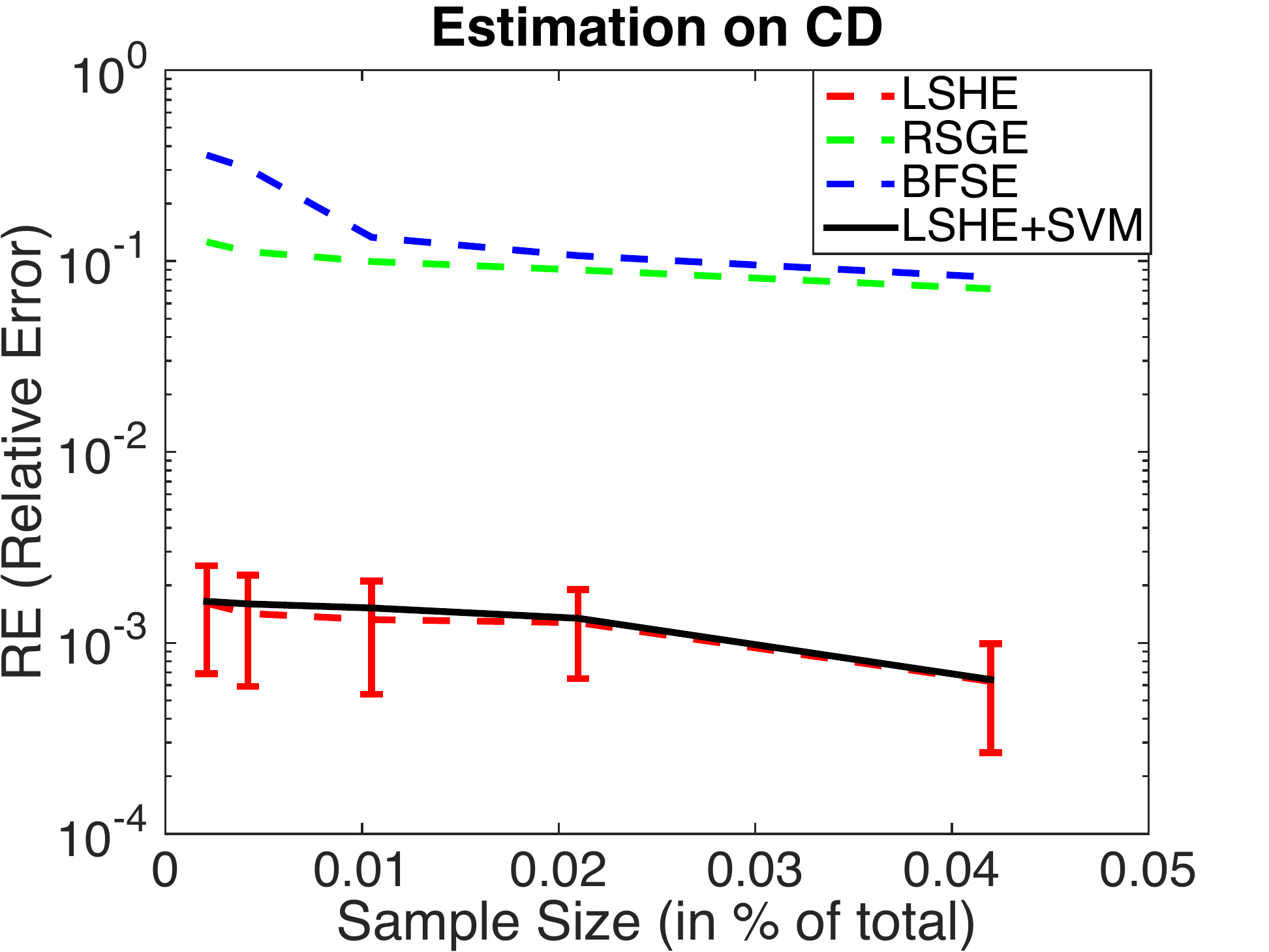}\hspace{-0.16in}
		\includegraphics[width=.36\linewidth]{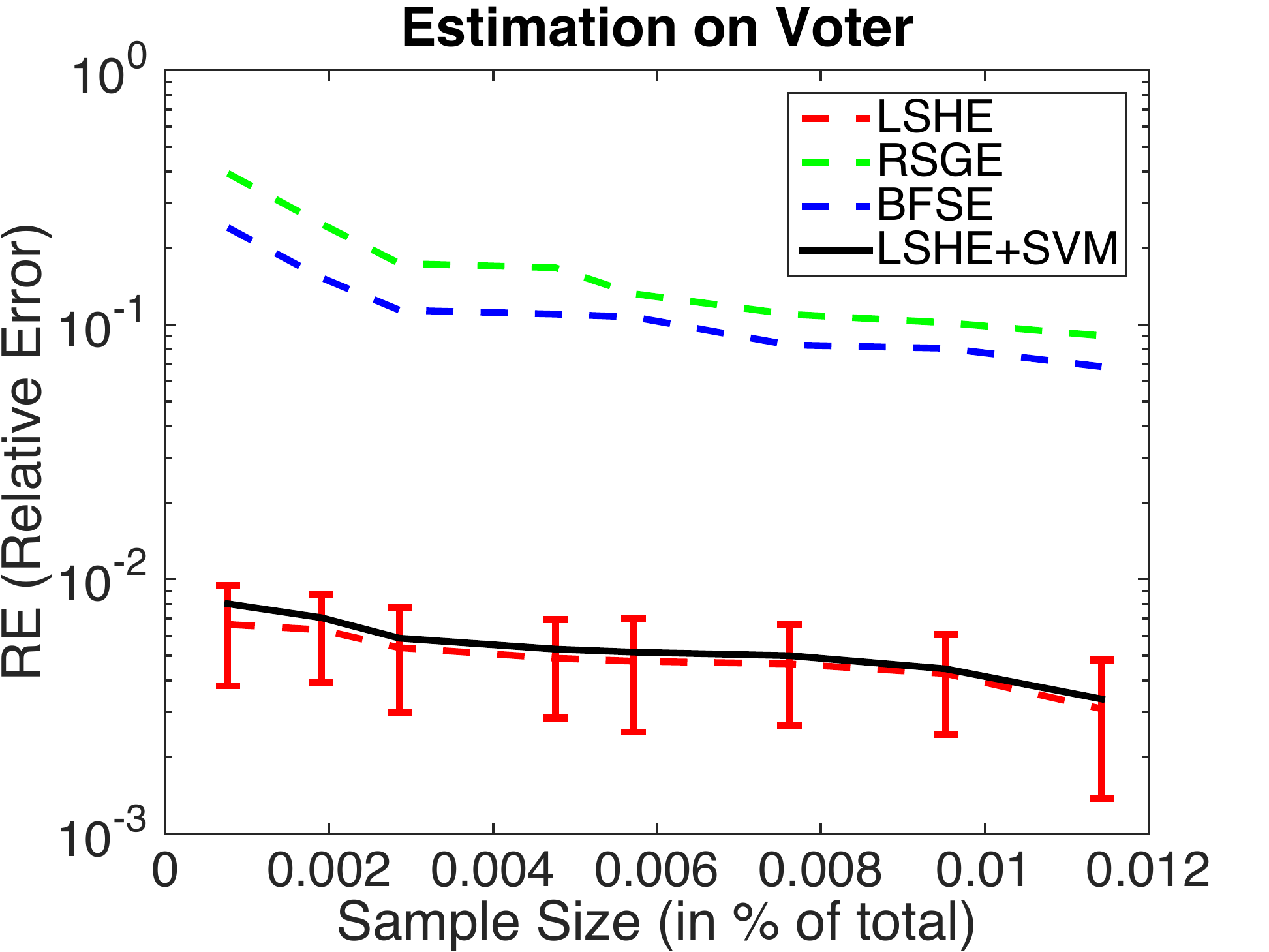}}
	\caption{ The dashed lines show the $\text{RE}$ of the four estimators on the three real data sets, where the y-axis is on the log-scale. Observe that LSHE outperforms all other three estimators in one to two orders of magnitude. The standard deviation of the $\text{RE}$ for LSHE is also shown in the plots with the red error bars, which is with respect to randomization of hash functions.  In particular, the PRSE performs unreliable estimation on the CD and Voter data sets. The dashed and solid black lines represent RE of LSHE using ground truth labels and a SVM classifier (y-axis is on the log scale). We discuss the LSHE + SVM estimator in section \ref{syria} (solid black line).}
	\label{fg1}
\end{figure}

\subsection{Evaluation Results}
In this section, we summarize our results regarding the aforementioned evaluation metrics by varying the sample size $m$ on the three real data sets (see Figure \ref{fg1}). We notice that for the CD and Voter data sets, we cannot obtain any reliable estimate (for any sample size) using PRSE. Recall that plain random sampling almost always samples pairs of records that correspond to non-matches. Thus, it is not surprising that this method is unreliable because sampling random pairs is unlikely to result in a duplicate pair for entity resolution tasks. Even with repeated trials, there are no edges in the specified sampled pairs of records, leading to an undefined value of $p$. This phenomenon is a common problem in random sampling estimators over sparse graphs. Almost all the sampled nodes are singletons. Subsampling a small sub-graph leads to a graph with most singleton nodes, which leads to a poor accuracy of BFSE. Thus, it is expected that random sampling will perform poorly. Unfortunately, there is no other baseline for unbiased estimation of the number of unique entities.

From Figure \ref{fg1} observe that the $\text{RE}$ for proposed estimator LSHE is approximately one to two orders of magnitude lower than the other considered methods, where the  y-axis is on the log-scale. Undoubtedly, our proposed estimator LSHE consistently leads to significantly lower $\text{RE}$ (lower error rates) than the other three estimators. This is not surprising from the analysis shown in section~\ref{sec:randVsAdaptive}. The variance of random sampling based methodologies will be significantly higher.


Taking a closer look at LSHE, we notice that we are able to efficiently generate samples with very high values of $p$ (see Table~\ref{table2}).
In addition, we can clearly see that LSHE achieves high accuracy with very few samples. For example, for the CD data set, with a sample size less than $0.05\%$ of the total possible pairs of records of the entire data set, LSHE achieves $0.0006$ $\text{RE}$. Similarly, for the Voter data set, with a sample size less than $0.012\%$ of the total possible pairs of records of the entire data set, LSHE achieves $0.003$ $\text{RE}$.

Also, note the small values of $K$ and $L$ parameters required to achieve the corresponding sample size. $K$ and $L$ affect the running time, and small values $KL \ll O(M^2)$ indicate significant computational savings as argued in section~\ref{sec:CC}

As mentioned earlier, we also evaluate the effect of using SVM prediction as a proxy for actual labels with our LSHE. The dotted plot shows those results. We remark on the results for LSHE + SVM in section~\ref{syria}.

\begin{table}[ht]
	\centering
	\setlength\tabcolsep{0.18cm}
		\begin{tabular}{l|cccc|cccc|cccc}
			\toprule
			\multicolumn{4}{c}{\textbf{Restaurant}}&\multicolumn{4}{c}{\textbf{CD}}&\multicolumn{4}{c}{\textbf{Voter}}
			\\
			\midrule
			Size&1.0&2.5&5.0&10&0.005 & 0.01& 0.02 & 0.04&0.002&0.006&0.009&0.013
			\\
			
			$p$&0.42&0.54&0.65&0.82& 0.72& 0.74& 0.82& 0.92& 0.62&0.72& 0.76& 0.82
			\\
			
			$K$&1&1&1&1& 1& 1& 1& 1& 4& 4& 4& 4
			\\						
			
			$L$& 4 & 8 &12&20& 5& 6 & 8 & 14 &  25&  32&  35&40
			\\

			\bottomrule			
		\end{tabular}
			\caption{ We illustrate part of the sample sizes (in \% in TOTAL) for different sets of samples generated by Min-Wise Hashing and their corresponding $p$ in all three data sets. }
			\label{table2}
\end{table}

\section{Documented Identifiable Deaths in the Syrian Conflict}
\label{syria}
In this section, we describe how we estimate the number of documented identifiable deaths for the Syrian data set. As noted before, we do not have ground truth labels for all record pairs, but the data set was partially labeled with 40,000 record pairs (out of 63 billion). We propose an alternative (automatic) method of labeling the sample pairs, which is also needed by our proposed estimation algorithm. More specifically, using the partially labeled pairs, we train an SVM. In fact, other supervised methods could be considered here, such as random forests, Bayesian Adaptive Regression Trees (BART), among others, however, given that SVMs perform very well, we omit such comparisons as we expect the results to be similar if not worse.

To train the SVM, we take every record pair and generate $k$-grams representation for each record. Then we spilt the partially labeled data into training and testing sets, respectively. Each training and testing set contains a pair of records $x_k = [R_i,\ R_j]$. In addition, we can use a binary label indicating whether the record pair is a match or non-match. That is, we can write the data as $\{x_k =[R_i,R_j],y_k\}$ as the set difference of the $k$-grams of the strings of pairs of records $R_i$ and $R_j$, respectively. Observe that $y_k = 1$ if the $R_i$ and $R_j$ is labelled as match and $y_k = -1$ otherwise. Next, we tune the SVM hyper-parameters using 5-fold cross-validation, and we find the accuracy of SVM on the testing set was 99.9\%.  With a precision as high a 0.99, we can reliably query an SVM and now treat this as an expert label.

%
%


To understand the effect of using SVM prediction as a proxy to label queries in our proposed unique entity estimation algorithm, we return to observing the behavior in figure \ref{fg1}. We treat the LSHE estimator on the other three real datasets as our baselineand compare to LHSE with the SVM component, where the SVM prediction replaces the querying process (LSHE +SVM). Observe in  Figure~\ref{fg1}, that the plot for LSH (solid black line) and LSH+SVM (dotted black line) overlap indicating a negligible loss in performance. This overlap is expected given the high accuracy (high precision) of the SVM classifier.



\subsection{Running Time}
We briefly highlight the speed of the sampling process since it could be used for on the fly or online unique entity estimation.
The total running time for producing 450,000 sampled pairs (out of a possible 63 billion) used for the LSH sampler (Section~\ref{sec:KLParametrizedLSH}) with $K=15$ and $L=10$ is 127 seconds. On the other hand, it will take approximately take 8 days to compute all pairwise similarities across the 354,996 Syrian records. Computing all pairwise similarities is the first prerequisite for any known adaptive sampling over pairs based on similarity -- if we are not using the proposed LSH sampler.
(Note: there are other ways of blocking~\citep{christen_2012}, however as mentioned in Section~\ref{sec:WHYLSH} they are mostly deterministic (or rule-based) and do not provide an estimator of the unique entities.)

\subsection{Unique Number of Documented Identifiable Victims}

In the Syrian dataset, with 354,996 records and possibly 63 billion ($6.3 \times 10^{10}$) pairs, our motivating goal was to estimate the unique number of documented identifiable victims. Specifically, in our final estimate, we use
452,728 sampled pairs that are given by  LSHE+SVM  ($K = 15$, $L = 10$) which has approximately $p=0.83$ based on the subset of labeled pairs. The sample size was chosen to balance the computational runtime and the value of $p$.
Specifically, one wants high values of $p$ (for a resulting low variance of our estimate) and, to balance running time, we
limit the sample size to be around the total number of records $O(M)$, to ensure a near linear time algorithm. (Such settings are determined by the application, but as we have demonstrated they work for a variety of real entity resolution data sets). We chose the SVM as our classifier to label the matches and non-matches. The final unique number of documented identifiable victims in the Syrian data set was estimated to be 191,874$ \pm 1772$, very close to the 191,369 documented identifiable deaths reported by HRDAG 2014, where their process is described in Appendix~\ref{sec:syrian}.

\subsection{Effects of $L$, $K$, on sample size and $p$}
\label{sec:parameterstudy}
In this section, we discuss the sensitivity of our proposed method as we vary the choice of $L$, $K$, the sample size $M$, and $p$.

We want both $KL \ll M$ as well as the number of samples to be $\ll M^2$, for the process to be truly sub-quadratic. For accuracy, we want high values of $p$, because the variance is monotonic in $p,$ which is also the recall of true labeled pairs. Thus, there is a natural trade-off. If we sample more, we get high $p$ but more computations.

$K$ and $L$ are the basic parameters of our sampler (Section~\ref{sec:KLParametrizedLSH}), which provide a tradeoff between the computationally complexity and accuracy. A large value of $K$ makes the buckets sparse exponentially), and thus, fewer pairs of records are sampled from each table. A large value of $L$ increases the repetition of hash tables (linearly), which increases the sample size. As already argued, the computational cost is $O(MKL)$.

To understand the behavior of $K$, $L$, $p$, and the computational cost, we perform a set of experiments on the Syrian dataset.  We use n-gram of 2---5, we vary L from 5--100 by steps of 5 and K takes values 15,18,20,23,25,28,30,32,35. For all these combinations, we then plot the recall (also the value of $p$) and the reduction ratio (RR), which is the percentage of computational savings. \textcolor{black}{A 99\% reduction ratio means that the original space has been reduced to only having to look at a
only 1\% of total sampled pairs.} Figure~\ref{syria-takethatAssad} shows the tradeoffs between reduction ratio and recall (or value of $p$). Every dot in the figures is one whole experiment.

Regardless of the n-gram variation from 2--5, the recall and reduction ratio (RR) are close to 1 as illustrated in Figure~\ref{syria-takethatAssad}. We see that an n-gram of 3 overall is most stable in having a recall and RR close to 0.99. We observe that $K=15$ and $L=10$ gives a high recall of around 83\% with less than half a million pairs (out of 63 billion possible) to evaluate ($RR \ge 0.99999$).

\begin{figure}[htbp]
\begin{center}
\includegraphics[width=\textwidth]{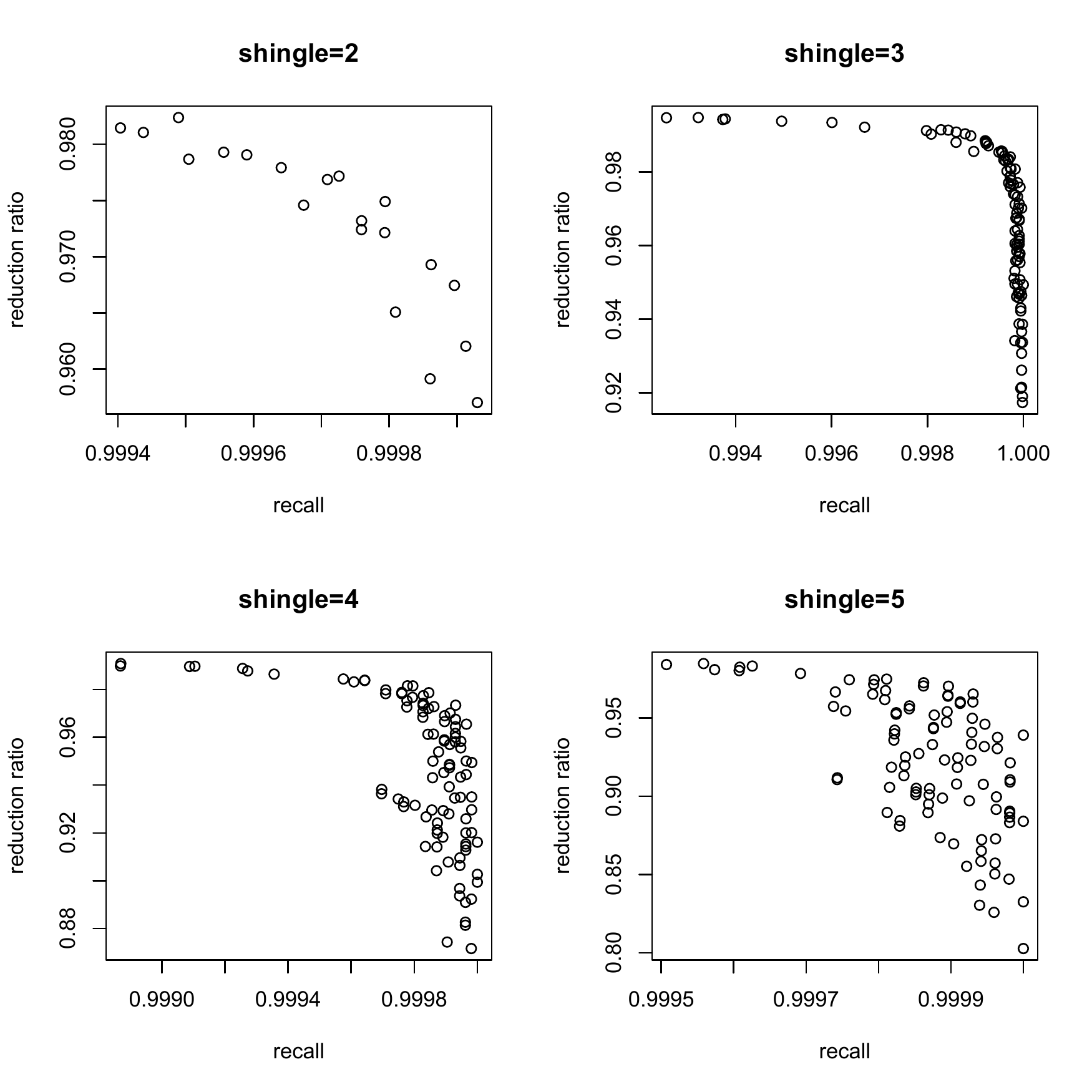}
\caption{For shingles 2--5, we plot the RR versus the recall. Overall, we see the best behavior for a shingle of 3, where the RR and recall can be reached at 0.98 and 1, respectively. We allow L and K to vary on a grid here. L varies from 5--100 by steps of 5; and K takes values 15, 18, 20, 23, 25, 28, 30, 32, and 35.}
\label{syria-takethatAssad}
\end{center}
\end{figure}

\section{Discussion}
Motivated by three real entity resolution tasks and the ongoing Syrian conflict, we have proposed a general, scalable algorithm for unique entity estimation. Our proposed method is an adaptive LSH on the edges of a graph, which in turn estimates the connected components in sub-quadratic time. Our estimator is unbiased and has provably low variance in contrast to other such estimators for unique entity estimation in the literature. In experimental results, it outperforms other estimators in the literature on three real entity resolution data sets. Moreover, we have estimated the number of documented identifiable deaths to be 191,874$ \pm 1772$, which very closely matches the 2014 HRDAG estimate, completed by hand-matching. To our knowledge, we have the first estimate for the number of documented identifiable deaths with a standard error associated with such an estimate. Our methods are scalable, potentially bringing impact to the human rights community, where such estimates could be updated in near real time. It could lead to further impact in public policy and transitional justice in Syria and other areas of conflict globally.

%

\textbf{Acknowledgements}: We would like to thank the Human Rights Data Analysis Group (HRDAG) and specifically, Megan Price, Patrick Ball, and Carmel Lee for commenting on our work and giving helpful suggestions that have improved the methodology and writing. We would also like to thank Stephen E. Fienberg and Lars Vilhuber for making this collaboration possible. PhD student Chen is supported by National Science Foundation (NSF) grant number 1652131. Shrivastava's work is supported by NSF-1652131 and NSF-1718478. Steorts's work is supported by NSF-1652431, NSF-1534412, and the Laboratory for Analytic Sciences (LAS). This work is representative of the author's alone and not of the funding organizations.
\clearpage
\newpage

%
%

\clearpage
\newpage

\appendix

\section{Syrian Data Set}
\label{sec:syrian}
In this section, we provide a more detailed description about the Syrian data set. As mentioned in section \ref{syriadata}, via
collaboration with the Human Rights Data Analysis Group (HRDAG), we have access to four databases. They come from the Violation Documentation Centre (VDC), Syrian Center for Statistics and Research (CSR-SY), Syrian Network for Human Rights (SNHR), and Syria Shuhada website (SS). Each database lists each victim killed in the Syrian conflict, along with identifying information about each person (see \cite{price_2013} for further details).

Data collection by these organizations is carried out in a variety of ways. Three of the groups (VDC, CSR-SY, and SNHR) have trusted networks on the ground in Syria.  These networks collect as much information as possible about the victims. For example, information is collected  through direct community contacts. Sometimes information comes from a victim's friends or family members. Other times, information comes from religious leaders, hospitals, or morgue records.  These networks also verify information collected via social and traditional media sources.  The fourth source, SS, aggregates records from multiple other sources, including NGOs and social and traditional media sources (see \url{http://syrianshuhada.com/} for information about specific sources).

These lists, despite being products of extremely careful, systematic data collection, are not probabilistic samples \citep{Sig2015, IOAS2015, CJLS2015, price2014updated}. Thus, these lists cannot be assumed to represent the underlying population of all victims of conflict violence.  Records collected by each source are subject to biases, stemming from a number of potential causes, including a group's relationship within a community, resource availability, and the current security situation.  Although it is beyond the scope of this paper, final analyses of these sources must appropriately adjust for such biases before drawing conclusions about patterns of violence.

\subsection{Syrian Handmatched Data Set}
\label{app:hand}
We describe how HRDAG's training data on the Syrian data set was created, which we use in our paper. We would like to note that we only use a small fraction of the training data for two reasons. The first is so that we can see how close our estimate is in practice to their original handmatched estimate, given that such a large portion of the data was handmatched. Second, we want to avoid using too much training data to avoid biases and also because such handmatching efforts would not be possible moving forward as the Syrian conflict continues, and our small training data set is meant for one moving forward in practice.

First, all documented deaths recorded by any of the documentation groups were concatenated together into a single list. From this list, records were broadly grouped according to governorate and year. In other words, all killings recorded in Homs in 2011 were examined as a group, looking for records with similar names and dates.

Next, several experts review these ``blocks", sometimes organized as pairs for comparison and other times organized as entire spreadsheets for review.  These experts determine whether pairs or groups of records refer to the same individual victim or not. Pairs or groups of records determined to refer to the same individual are assigned to the same ``match group." All of the records contributing to a single ``match group" are then combined into a single record. This new single record is then again examined as a pair or group with other records, in an iterative process.

For example, two records with the same name, date, and location may be identified as referring to the same individual, and combined into a single record. In a second review process, it may be found that that record also matches the name and location, but not date, of a third record. The third record may list a date one week later than the two initial records, but still be determined to refer to the same individual. In this second pass, information from this third record will also be included in the single combined record.

When records are combined, the most precise information available from each of the individual records is kept. If some records contain contradictory information (for example, if records A and B record the victim as age 19 and record C records age 20) the most frequently reported information is used (in this case, age 19). If the same number of records report each piece of contradictory information, a value from the contradictory set is randomly selected.

Three of the experts are native Arabic speakers; they review records with the original Arabic content. Two of the experts review records translated into English. These five experts review overlapping sets of records, meaning that some records are evaluated by two, three, four, or all five of the experts. This makes it possible to check the consistency of the reviewers, to ensure that they are each reaching comparable decisions regarding whether two (or more) records refer to the same individual or not.

After an initial round of clustering, subsets of these combined records were then re-examined to identify previously missed groups of records that refer to the same individual, particularly across years (e.g., records with dates of death 2011/12/31 and 2012/01/01 might refer to the same individual) and governorates (e.g., records with neighboring locations of death might refer to the same individual).

\section{Unique Entity Estimation Proofs}
\label{sec:app}
First, we introduce four indicators.
First, let $\mathbb{I}_2$ denote every 2-vertex clique in $G^*$ (recall that $G^*$ is the original graph and $G'$ is the observed one):
	\begin{equation}
	\mathbb{I}_2=\begin{cases}
	1, & \text{if this clique is in } G'.\\
	0, & \text{otherwise}.
	\end{cases}
	\end{equation}
Second, let 	
$\mathbb{I}_{33}$ denote every 3-vertex clique in $G^*$:
	\begin{equation}
	\mathbb{I}_{33}=\begin{cases}
	1, & \text{if this clique remains as a 3-clique in } G'.\\
	0, & \text{otherwise}.
	\end{cases}
	\end{equation}
Third, let
$\mathbb{I}_{32}$ denote every 3-vertex clique in $G^*$:
	\begin{equation}
	\mathbb{I}_{32}=\begin{cases}
	1, & \text{if this clique breaks to a 2-clique in } G'.\\
	0, & \text{otherwise}.
	\end{cases}
	\end{equation}
Finally, let
$\mathbb{I}_{31}$ denote every 3-vertex clique in $G^*$:
	\begin{equation}
	\mathbb{I}_{31}=\begin{cases}
	1, & \text{if this clique breaks into only 1-cliques in } G'.\\
	0, & \text{otherwise}.
	\end{cases}
	\end{equation}

\subsection{Expectation}
We now prove that our estimator is unbiased. Consider
\begin{equation}
\label{C3}
\mathbb{E}[n_3']= \mathbb{E}[\sum_{i=1}^{n_3^*}{\mathbb{I}_{33i}}]= n_3^* \cdot p^2 \cdot (3-2p),
\end{equation}
\begin{equation}
\label{C2}
\begin{split}
\mathbb{E}[n_2'] &=    \mathbb{E}[\sum_{i=1}^{n_2^*}{\mathbb{I}_{2i}}] + \mathbb{E}[\sum_{i=1}^{n_3^*}{\mathbb{I}_{32i}}]     \\
&= n_2^* \cdot p + n_3^*\cdot (3\cdot(1-p)^2\cdot p), \quad \text{and}
\end{split}
\end{equation}
\begin{equation}
\label{C1}
\begin{split}
\mathbb{E}[n_1']&= n_1^*+
\mathbb{E}[\sum_{i=1}^{n_2^*}{(1-\mathbb{I}_{2i})}] + \mathbb{E}[\sum_{i=1}^{n_3^*}{\mathbb{I}_{32i}}] + \mathbb{E}[\sum_{i=1}^{n_3^*}{\mathbb{I}_{31i}}]\\
&= n_1^* + n_2^* \cdot(2\cdot(1-p)) + n_3^* \cdot (3\cdot (1-p)^2\cdot p) \\
&+ n_3^* \cdot (3\cdot(1-p)^3 ).
\end{split}
\end{equation}

Our estimator is unbiased via equations \ref{C1}, \ref{C2}, \ref{C3}:
\begin{equation*}
\begin{split}
\mathbb{E}[LSHE] =&\mathbb{E}[n_1' +  n_2' \cdot \frac{2p-1}{p}  \\
&+ n_3' \cdot \frac{1-6 \cdot (1-p)^2 \cdot p}{p^2 \cdot (3-2p) }
+ \sum_{i=4}^{M} n_i] \\
& = \mathbb{E}[n_1'] +  \frac{2p-1}{p} \cdot \mathbb{E}[n_2']    \\
&+ \frac{1-6 \cdot (1-p)^2 \cdot p}{p^2 \cdot (3-2p)} \cdot \mathbb{E}[n_3']   +  \mathbb{E}[\sum_{i=4}^{M}n_i]\\
& = n_1^* + n_2^* + n_3^* + \sum_{i=4}^{N}n_i^*\\
& = n.
\end{split}
\end{equation*}

%
%

\subsection{Variance}
We now turn to deriving the variance of our proposed estimator, showing that
$$\mathbb{V}[LSHE] = n_3^* \cdot \frac{(p-1)^2 \cdot (3p^2-p+1)}{p^2 \cdot (3-2p)} + n_2^* \cdot \frac{(1-p)}{p}.$$
Consider
\begin{equation}
\label{v1}
\begin{split}
&\mathbb{V}[LSHE]=
\mathbb{V}[\frac{1-6 \cdot (1-p)^2 \cdot p}{p^2 \cdot (3-2p) } \cdot \sum_{i=1}^{n_3^*}{\mathbb{I}_{33i}}\\
&+
\frac{2p-1}{p} \cdot
(\sum_{i=1}^{n_2^*}{\mathbb{I}_{2i}}
+
\sum_{i=1}^{n_3^*}{\mathbb{I}_{32i}})\\
&+
\sum_{i=1}^{n_3^*}{\mathbb{I}_{31i}}
+
\sum_{i=1}^{n_3^*}{\mathbb{I}_{32i}}
+
\sum_{i=1}^{n_2^*}{(1-\mathbb{I}_{2i})}]\\
&=
\mathbb{V}ar[\frac{1-6 \cdot (1-p)^2 \cdot p}{p^2 \cdot (3-2p) } \cdot \sum_{i=1}^{n_3^*}{\mathbb{I}_{33i}}
+
\frac{3p-1}{p} \cdot
\sum_{i=1}^{n_3^*}{\mathbb{I}_{32i}}\\
&+
3 \cdot
\sum_{i=1}^{n_2^*}{\mathbb{I}_{31i}}
-
\frac{1}{p} \cdot
\sum_{i=1}^{n_2^*}{\mathbb{I}_{2i}}].
\end{split}
\end{equation}

Next, we replace $1-6 \cdot (1-p)^2 \cdot p$ by $a,$ and by simplifying equation \ref{v1},
we find
\begin{equation}
\begin{split}
&=
\frac{a^2}{(p^2 \cdot (3-2p))^2 } \cdot \mathbb{V}[\sum_{i=1}^{n_3^*}{\mathbb{I}_{33i}}]
 +
\frac{(3p-1)^2}{p^2} \cdot
\mathbb{V}[\sum_{i=1}^{n_3^*}{\mathbb{I}_{32i}}]\\
&+
9 \cdot
\mathbb{V}[\sum_{i=1}^{n_2^*}{\mathbb{I}_{31i}}]
-
\frac{1}{p^2} \cdot
\mathbb{V}[\sum_{i=1}^{n_2^*}{\mathbb{I}_{2i}}]
+
Cov(\sum_{i=1}^{n_3^*}{\mathbb{I}_{33i}}, \sum_{i=1}^{n_3^*}{\mathbb{I}_{32i}})\\
&+
Cov(\sum_{i=1}^{n_3^*}{\mathbb{I}_{33i}}, \sum_{i=1}^{n_3^*}{\mathbb{I}_{31i}})
+
Cov(\sum_{i=1}^{n_3^*}{\mathbb{I}_{32i}}, \sum_{i=1}^{n_3^*}{\mathbb{I}_{31i}}).
\end{split}
\end{equation}

Note that the covariance of $\sum_{i=1}^{n_3^*}{\mathbb{I}_{2i}}$ with any indicator is zero due to independence.
Furthermore, since the indicators are Bernoulli distributed random variables, the variance is easily found. Consider
\begin{equation}
\label{v4}
\begin{split}
\mathbb{V}ar[\sum_{i=1}^{n_3^*}{\mathbb{I}_{33i}}]
=  \frac{a^2 \cdot (1-p^2 \cdot (3-2p))}{p^2 \cdot (3-2p) } \cdot n_3^*
\end{split}
\end{equation}

\begin{equation}
\label{v5}
\begin{split}
\mathbb{V}ar[\sum_{i=1}^{n_3^*}{\mathbb{I}_{32i}}]
= \frac{3 \cdot (3p-1)^2 \cdot (1-p)^2 \cdot (1-3p \cdot (1-p)^2)}{p} \cdot n_3^*
\end{split}
\end{equation}

\begin{equation}
\label{v6}
\begin{split}
\mathbb{V}ar[\sum_{i=1}^{n_3^*}{\mathbb{I}_{31i}}]
= 9 \cdot (1-p)^3 \cdot (1-(1-p)^3) \cdot n_3^*
\end{split}
\end{equation}

\begin{equation}
\label{v7}
\begin{split}
\mathbb{V}ar[\sum_{i=1}^{n_2^*}{\mathbb{I}_{2i}}]
= \frac{(1-p)}{p} \cdot n_2^*
\end{split}
\end{equation}

Using equations \ref{v4} -- \ref{v7}, the covariance simplifies to
\begin{equation*}
\begin{split}
&Cov(\sum_{i=1}^{n_3^*}{\mathbb{I}_{33i}}, \sum_{i=1}^{n_3^*}{\mathbb{I}_{32i}})\\
&=  \mathbb{E}[\sum_{i=1}^{n_3^*}{\mathbb{I}_{33i}} \cdot \sum_{i=1}^{n_3^*}{\mathbb{I}_{32i}}) -  	\mathbb{E}[\sum_{i=1}^{n_3^*}{\mathbb{I}_{33i}}] \mathbb{E}[\sum_{i=1}^{n_2^*}{\mathbb{I}_{32i}}]\\
&=  \sum_{i=1}^{n_3^*}{\sum_{j=1}^{n_3^*}{
		\mathbb{E}[\mathbb{I}_{33i} \cdot \mathbb{I}_{32j}] -  	
		\mathbb{E}[\mathbb{I}_{33i}]
		\mathbb{E}[\mathbb{I}_{32j}]}}\\
&=-6 \cdot a \cdot n_3^*
\end{split}
\end{equation*}

When $i=j$, since $\mathbb{I}_{33j}$ and $\mathbb{I}_{32j}$ are mutually exclusive, $\mathbb{E}[\mathbb{I}_{33i} \cdot \mathbb{I}_{32j}]=0$. Otherwise when $i \neq j$, $\mathbb{I}_{33j}$ and $\mathbb{I}_{32j}$ are independent and $\mathbb{E}[\mathbb{I}_{33i} \cdot \mathbb{I}_{32j}]=0$. Similarly,
\begin{equation*}
\begin{split}
Cov(\sum_{i=1}^{n_3^*}{\mathbb{I}_{33i}}, \sum_{i=1}^{n_1^*}{\mathbb{I}_{32i}})
= - 6 \cdot a \cdot (1-p)^3 \cdot n_3^*
\end{split}
\end{equation*}
and
\begin{equation*}
\begin{split}
Cov(\sum_{i=1}^{n_2^*}{\mathbb{I}_{32i}}, \sum_{i=1}^{n_1^*}{\mathbb{I}_{31i}})
= - 18 \cdot (1-p)^5 \cdot (3p-1) \cdot n_3^*.
\end{split}
\end{equation*}

It then follows that
\begin{equation*}
\begin{split}
&\mathbb{V}ar[LSHE]=n_3^*\cdot\Big(\frac{3\cdot(3p-1)^2\cdot(1-p)^2\cdot(1-3p\cdot(1-p)^2)}{p}\\
&+\frac{(1-6\cdot(1-p)^2\cdot p)^2\cdot(1-p^2\cdot(3-2p))}{p^2\cdot(3-2p)}\\
&-(6\cdot(1-6\cdot(1-p)^2\cdot p)\cdot(3p-1)\cdot(1-p)^2)\\
&+ 9\cdot(1-p)^6+3\cdot(1-p)^3 \Big)\\
&+ n_2^*\cdot\frac{(1-p)}{p}\\
&= n_3^* \cdot \frac{(p-1)^2 \cdot (3p^2-p+1)}{p^2 \cdot (3-2p)} + n_2^* \cdot \frac{(1-p)}{p}.
\end{split}
\end{equation*}

\vspace*{4em}
\subsection{Variance Monotonicity}
\label{mo}
We now prove the monotonicity of the variance of our estimator.
\begin{theorem}
	\label{thm}
	$\mathbb{V}ar[LSHE]$ is monotonically decreasing when $p$ increases in  range $(0,1]$.
\end{theorem}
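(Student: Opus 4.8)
The plan is to start from the closed form of the variance already derived in Equation~\ref{var} and reduce the monotonicity claim to a one-variable sign check after a change of variables that clears all denominators. First I would write
\[
\mathbb{V}[\text{LSHE}] = n_3^*\, g(p) + n_2^*\,\frac{1-p}{p},
\qquad g(p) := \frac{(p-1)^2\,(3p^2-p+1)}{p^2\,(3-2p)},
\]
and note that $n_2^*,n_3^*\ge 0$ and $p^2(3-2p)>0$ on $(0,1]$. Hence it suffices to show that each of $p\mapsto \tfrac{1-p}{p}$ and $p\mapsto g(p)$ is non-increasing on $(0,1]$ (and strictly decreasing wherever the corresponding coefficient is positive); a nonnegative combination of such functions is then monotonically decreasing, and one reads off $\mathbb{V}[\text{LSHE}]\to 0$ as $p\to1$.

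Next I would introduce the substitution $t=\frac{1-p}{p}$, which is a strictly decreasing bijection of $(0,1]$ onto $[0,\infty)$ with $1+t=\tfrac1p$. The first term is simply $t$, so it is strictly increasing in $t$ and hence strictly decreasing in $p$. For the second term, substituting $p=\tfrac1{1+t}$ and simplifying the three factors ($\tfrac{(1-p)^2}{p^2}=t^2$, $\;3p^2-p+1=\tfrac{t^2+t+3}{(1+t)^2}$, $\;3-2p=\tfrac{3t+1}{1+t}$) gives
\[
g(p) \;=\; \frac{t^2\,(t^2+t+3)}{(1+t)(3t+1)} \;=\; \frac{t^4+t^3+3t^2}{3t^2+4t+1} \;=:\; \tilde g(t),
\]
so it is enough to prove that $\tilde g$ is increasing on $[0,\infty)$. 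Differentiating, the numerator of $\tilde g'(t)$ is $(4t^3+3t^2+6t)(3t^2+4t+1)-(t^4+t^3+3t^2)(6t+4)$, which one computes to be $6t^5+15t^4+12t^3+15t^2+6t=3t\,(2t^4+5t^3+4t^2+5t+2)$; since every coefficient is nonnegative this is strictly positive for $t>0$, while the denominator $(3t^2+4t+1)^2$ is positive. Thus $\tilde g$ is strictly increasing on $[0,\infty)$, so $g(p)=\tilde g(\tfrac1p-1)$ is continuous and strictly decreasing on $(0,1]$, which completes the argument.

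\textbf{Main obstacle:} a direct quotient-rule differentiation of $g(p)$ yields a degree-six polynomial numerator whose sign on $(0,1)$ is not visible at a glance (one would have to factor it, e.g. as $-3p(p-1)(2p^4-4p^3-p^2+3p-2)$ and then argue each factor's sign). The real work is recognizing the substitution $t=(1-p)/p$, which both turns $\tfrac{1-p}{p}$ into the identity and converts $g$ into a rational function whose derivative numerator has manifestly nonnegative coefficients; after that, everything is routine bookkeeping. An alternative, essentially equivalent route is to keep $p$ and verify the factorization of the numerator of $g'(p)$ directly, using that on $(0,1)$ one has $-3p(p-1)>0$ and $2p^4-4p^3-p^2+3p-2 = 2p^3(p-2)-(p-1)(p-2)<0$.
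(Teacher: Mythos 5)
Your proof is correct, and it takes a genuinely different (though closely related) route from the paper's. The paper proves the theorem by differentiating the variance directly in $p$ and exhibiting the factored derivative
$\frac{d\,\mathbb{V}[\text{LSHE}]}{dp}=\frac{3(2-p)(p-1)(p+1)\left((p-1)^2+p^2\right)}{p^3(2p-3)^2}\,n_3^*-\frac{1}{p^2}\,n_2^*$
(the appendix actually displays the second term as $-p^2\cdot n_2^*$, a typo for $-n_2^*/p^2$ that does not affect the sign argument), and then reading off that on $(0,1]$ every factor except $(p-1)$ is positive, so both terms are nonpositive. You instead change variables to $t=(1-p)/p$, which turns the $n_2^*$ term into $t$ and the $n_3^*$ coefficient into $\tilde g(t)=\frac{t^4+t^3+3t^2}{3t^2+4t+1}$, whose derivative has numerator $3t\,(2t^4+5t^3+4t^2+5t+2)$ with manifestly nonnegative coefficients; monotonicity in $p$ then follows since $p\mapsto t$ is a decreasing bijection of $(0,1]$ onto $[0,\infty)$. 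I verified the algebra on both sides: your factorization $-3p(p-1)(2p^4-4p^3-p^2+3p-2)$ of the derivative's numerator is exactly the same degree-six polynomial the paper writes as $3p(2-p)(p-1)(p+1)((p-1)^2+p^2)$, so your sketched ``alternative route'' in the final paragraph is essentially the paper's proof. What your main route buys is that positivity becomes visible from nonnegative coefficients rather than from a factor-by-factor sign check, and you are somewhat more careful than the paper about the edge cases (the weak monotonicity when $n_2^*=n_3^*=0$, and the vanishing of the variance at $p=1$, which the paper also notes).
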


\begin{proof}
	\begin{lemma}
		\label{lemma}
		First order derivative of $\mathbb{V}ar[LSHE]$ is negative when $p \in (0,1]$.
	\end{lemma}
	\begin{proof}
	Consider
	\begin{align*}
	\dfrac{d(\mathbb{V}[LSHE])}{dp} &= \dfrac{3(2-p)(p-1)(p+1)((p-1)^2+p^2)}{p^3(2p-3)^2} \\ &\quad \cdot n_3^*
	-p^2 \cdot n_2^*.
	\end{align*}
	When $p\in(0,1]$,  $-p^2 <0$. Because $(2-p), (p+1), (p-1)^2+p^2,p^3(2p-3)^2$ are all positive and $(p-1)$ is the only term that is negative, $$\frac{3(2-p)(p-1)(p+1)((p-1)^2+p^2)}{p^3(2p-3)^2} < 0.$$ Thus, $\frac{d(\mathbb{V}ar[LSHE])}{dp} < 0$.
\end{proof}
	By using Lemma \ref{lemma}, we can consequently prove Theorem \ref{thm}. We also note that when $p=1$, $\mathbb{V}ar[LSHE] = 0$.
\end{proof}

\clearpage
\newpage

\bibliographystyle{imsart-nameyear}
\bibliography{entity}

\begin{thebibliography}{37}

\bibitem[\protect\citeauthoryear{Aleksandrov}{1956}]{wiki:xxx}
\begin{bbook}[author]
\bauthor{\bsnm{Aleksandrov},~\bfnm{Pavel~S}\binits{P.~S.}}
(\byear{1956}).
\btitle{Combinatorial topology}
\bvolume{1}.
\bpublisher{Courier Corporation}.
\end{bbook}
\endbibitem

\bibitem[\protect\citeauthoryear{Andoni and Indyk}{2004}]{Report:E2LSH}
\begin{btechreport}[author]
\bauthor{\bsnm{Andoni},~\bfnm{Alexandr}\binits{A.}} \AND
  \bauthor{\bsnm{Indyk},~\bfnm{Piotr}\binits{P.}}
(\byear{2004}).
\btitle{E2lsh: Exact Euclidean Locality Sensitive Hashing}
\btype{Technical Report}.
\end{btechreport}
\endbibitem

\bibitem[\protect\citeauthoryear{Bhattacharya and Getoor}{2006}]{getoor_2006}
\begin{binproceedings}[author]
\bauthor{\bsnm{Bhattacharya},~\bfnm{Indrajit}\binits{I.}} \AND
  \bauthor{\bsnm{Getoor},~\bfnm{Lise}\binits{L.}}
(\byear{2006}).
\btitle{A Latent Dirichlet Model for Unsupervised Entity Resolution.}
In \bbooktitle{SDM}
\bvolume{5}
\bpages{59}.
\bpublisher{SIAM}.
\end{binproceedings}
\endbibitem

\bibitem[\protect\citeauthoryear{Broder}{1997}]{Proc:Broder}
\begin{binproceedings}[author]
\bauthor{\bsnm{Broder},~\bfnm{Andrei~Z.}\binits{A.~Z.}}
(\byear{1997}).
\btitle{On the Resemblance and Containment of Documents}.
In \bbooktitle{the Compression and Complexity of Sequences}
\bpages{21-29}.
\end{binproceedings}
\endbibitem

\bibitem[\protect\citeauthoryear{Chazelle, Rubinfeld and
  Trevisan}{2005}]{chazelle2005approximating}
\begin{barticle}[author]
\bauthor{\bsnm{Chazelle},~\bfnm{Bernard}\binits{B.}},
  \bauthor{\bsnm{Rubinfeld},~\bfnm{Ronitt}\binits{R.}} \AND
  \bauthor{\bsnm{Trevisan},~\bfnm{Luca}\binits{L.}}
(\byear{2005}).
\btitle{Approximating the minimum spanning tree weight in sublinear time}.
\bjournal{SIAM Journal on computing}
\bvolume{34}
\bpages{1370--1379}.
\end{barticle}
\endbibitem

\bibitem[\protect\citeauthoryear{Christen}{2012}]{christen_2012}
\begin{barticle}[author]
\bauthor{\bsnm{Christen},~\bfnm{Peter}\binits{P.}}
(\byear{2012}).
\btitle{A survey of indexing techniques for scalable record linkage and
  deduplication}.
\bjournal{IEEE Transactions on Knowledge and Data Engineering}
\bvolume{24}
\bpages{1537--1555}.
\end{barticle}
\endbibitem

\bibitem[\protect\citeauthoryear{Christen}{2014}]{christen14voter}
\begin{bmisc}[author]
\bauthor{\bsnm{Christen},~\bfnm{Peter}\binits{P.}}
(\byear{2014}).
\btitle{Preparation of a real voter data set for record linkage and duplicate
  detection research}.
\end{bmisc}
\endbibitem

\bibitem[\protect\citeauthoryear{Erdos and
  R{\'e}nyi}{1960}]{erdos1960evolution}
\begin{barticle}[author]
\bauthor{\bsnm{Erdos},~\bfnm{Paul}\binits{P.}} \AND
  \bauthor{\bsnm{R{\'e}nyi},~\bfnm{Alfr{\'e}d}\binits{A.}}
(\byear{1960}).
\btitle{On the evolution of random graphs}.
\bjournal{Publ. Math. Inst. Hung. Acad. Sci}
\bvolume{5}
\bpages{17--60}.
\end{barticle}
\endbibitem

\bibitem[\protect\citeauthoryear{Fellegi and Sunter}{1969}]{fellegi_1969}
\begin{barticle}[author]
\bauthor{\bsnm{Fellegi},~\bfnm{I.}\binits{I.}} \AND
  \bauthor{\bsnm{Sunter},~\bfnm{A.}\binits{A.}}
(\byear{1969}).
\btitle{A Theory for Record Linkage}.
\bjournal{Journal of the American Statistical Association}
\bvolume{64}
\bpages{1183--1210}.
\end{barticle}
\endbibitem

\bibitem[\protect\citeauthoryear{Frank}{1978}]{1978paper}
\begin{barticle}[author]
\bauthor{\bsnm{Frank},~\bfnm{Ove}\binits{O.}}
(\byear{1978}).
\btitle{Estimation of the Number of Connected Components in a Graph by Using a
  Sampled Subgraph}.
\bjournal{Scandinavian Journal of Statistics}
\bvolume{5}
\bpages{177-188}.
\end{barticle}
\endbibitem

\bibitem[\protect\citeauthoryear{Gionis et~al.}{1999}]{gionis_1999}
\begin{binproceedings}[author]
\bauthor{\bsnm{Gionis},~\bfnm{Aristides}\binits{A.}},
  \bauthor{\bsnm{Indyk},~\bfnm{Piotr}\binits{P.}},
  \bauthor{\bsnm{Motwani},~\bfnm{Rajeev}\binits{R.}} \betal{et~al.}
(\byear{1999}).
\btitle{Similarity search in high dimensions via hashing}.
In \bbooktitle{Very Large Data Bases (VLDB)}
\bvolume{99}
\bpages{518--529}.
\end{binproceedings}
\endbibitem

\bibitem[\protect\citeauthoryear{Grillo}{2016}]{hrdag-2017}
\begin{barticle}[author]
\bauthor{\bsnm{Grillo},~\bfnm{Christine}\binits{C.}}
(\byear{2016}).
\btitle{Judges in {H}abre {T}rial {C}ite {H}{R}{D}{A}{G} {A}nalysis}.
\end{barticle}
\endbibitem

\bibitem[\protect\citeauthoryear{Gutman, Afendulis and
  Zaslavsky}{2013}]{gutman_2013}
\begin{barticle}[author]
\bauthor{\bsnm{Gutman},~\bfnm{Roee}\binits{R.}},
  \bauthor{\bsnm{Afendulis},~\bfnm{Christopher~C}\binits{C.~C.}} \AND
  \bauthor{\bsnm{Zaslavsky},~\bfnm{Alan~M}\binits{A.~M.}}
(\byear{2013}).
\btitle{A Bayesian procedure for file linking to analyze end-of-life medical
  costs}.
\bjournal{Journal of the American Statistical Association}
\bvolume{108}
\bpages{34--47}.
\end{barticle}
\endbibitem

\bibitem[\protect\citeauthoryear{Indyk and Motwani}{1998}]{Proc:Indyk_STOC98}
\begin{binproceedings}[author]
\bauthor{\bsnm{Indyk},~\bfnm{Piotr}\binits{P.}} \AND
  \bauthor{\bsnm{Motwani},~\bfnm{Rajeev}\binits{R.}}
(\byear{1998}).
\btitle{Approximate Nearest Neighbors: Towards Removing the Curse of
  Dimensionality}.
In \bbooktitle{STOC}
\bpages{604-613}.
\end{binproceedings}
\endbibitem

\bibitem[\protect\citeauthoryear{Li, Shrivastava and Konig}{2012}]{li2012gpu}
\begin{binproceedings}[author]
\bauthor{\bsnm{Li},~\bfnm{Ping}\binits{P.}},
  \bauthor{\bsnm{Shrivastava},~\bfnm{Anshumali}\binits{A.}} \AND
  \bauthor{\bsnm{Konig},~\bfnm{Christian~A}\binits{C.~A.}}
(\byear{2012}).
\btitle{Gpu-based minwise hashing: Gpu-based minwise hashing}.
In \bbooktitle{Proceedings of the 21st International Conference on World Wide
  Web}
\bpages{565--566}.
\bpublisher{ACM}.
\end{binproceedings}
\endbibitem

\bibitem[\protect\citeauthoryear{Liseo and Tancredi}{2013}]{liseo2013some}
\begin{barticle}[author]
\bauthor{\bsnm{Liseo},~\bfnm{Brunero}\binits{B.}} \AND
  \bauthor{\bsnm{Tancredi},~\bfnm{Andrea}\binits{A.}}
(\byear{2013}).
\btitle{Some advances on Bayesian record linkage and inference for linked
  data}.
\bjournal{URL http://www. ine. es/e/essnetdi\_ws2011/ppts/Liseo\_Tancredi.
  pdf}.
\end{barticle}
\endbibitem

\bibitem[\protect\citeauthoryear{Luo and Shrivastava}{2017}]{luo2017Arrays}
\begin{barticle}[author]
\bauthor{\bsnm{Luo},~\bfnm{Chen}\binits{C.}} \AND
  \bauthor{\bsnm{Shrivastava},~\bfnm{Anshumali}\binits{A.}}
(\byear{2017}).
\btitle{Arrays of (locality-sensitive) Count Estimators {(ACE):} High-Speed
  Anomaly Detection via Cache Lookups}.
\bjournal{CoRR}
\bvolume{abs/1706.06664}.
\end{barticle}
\endbibitem

\bibitem[\protect\citeauthoryear{McCallum and
  Wellner}{2004}]{McCallumWellner04}
\begin{binproceedings}[author]
\bauthor{\bsnm{McCallum},~\bfnm{Andrew}\binits{A.}} \AND
  \bauthor{\bsnm{Wellner},~\bfnm{Ben}\binits{B.}}
(\byear{2004}).
\btitle{{Conditional Models of Identity Uncertainty with Application to Noun
  Coreference}}.
In \bbooktitle{Advances in Neural Information Processing Systems (NIPS '04)}
\bpages{905--912}.
\bpublisher{MIT Press}.
\end{binproceedings}
\endbibitem

\bibitem[\protect\citeauthoryear{Price and Ball}{2015a}]{IOAS2015}
\begin{barticle}[author]
\bauthor{\bsnm{Price},~\bfnm{Megan}\binits{M.}} \AND
  \bauthor{\bsnm{Ball},~\bfnm{Patrick}\binits{P.}}
(\byear{2015}a).
\btitle{Selection bias and the statistical patterns of mortality in conflict}.
\bjournal{Statistical Journal of the IAOS}
\bvolume{31}
\bpages{263--272}.
\end{barticle}
\endbibitem

\bibitem[\protect\citeauthoryear{Price and Ball}{2015b}]{CJLS2015}
\begin{barticle}[author]
\bauthor{\bsnm{Price},~\bfnm{Megan}\binits{M.}} \AND
  \bauthor{\bsnm{Ball},~\bfnm{Patrick}\binits{P.}}
(\byear{2015}b).
\btitle{The Limits of Observation for Understanding Mass Violence}.
\bjournal{Canadian Journal of Law and Society/Revue Canadienne Droit et
  Soci{\'e}t{\'e}}
\bvolume{30}
\bpages{237--257}.
\end{barticle}
\endbibitem

\bibitem[\protect\citeauthoryear{Price, Gohdes and Ball}{2015}]{Sig2015}
\begin{barticle}[author]
\bauthor{\bsnm{Price},~\bfnm{Megan}\binits{M.}},
  \bauthor{\bsnm{Gohdes},~\bfnm{Anita}\binits{A.}} \AND
  \bauthor{\bsnm{Ball},~\bfnm{Patrick}\binits{P.}}
(\byear{2015}).
\btitle{Documents of war: Understanding the Syrian conflict}.
\bjournal{Significance}
\bvolume{12}
\bpages{14--19}.
\end{barticle}
\endbibitem

\bibitem[\protect\citeauthoryear{Price et~al.}{2013}]{price_2013}
\begin{barticle}[author]
\bauthor{\bsnm{Price},~\bfnm{Megan}\binits{M.}},
  \bauthor{\bsnm{Klingner},~\bfnm{Jeff}\binits{J.}},
  \bauthor{\bsnm{Qtiesh},~\bfnm{Anas}\binits{A.}} \AND
  \bauthor{\bsnm{Ball},~\bfnm{Patrick}\binits{P.}}
(\byear{2013}).
\btitle{Updated statistical analysis of documentation of killings in the Syrian
  Arab Republic}.
\bjournal{United Nations Office of the UN High Commissioner for Human Rights}.
\end{barticle}
\endbibitem

\bibitem[\protect\citeauthoryear{Price et~al.}{2014}]{price2014updated}
\begin{barticle}[author]
\bauthor{\bsnm{Price},~\bfnm{Megan}\binits{M.}},
  \bauthor{\bsnm{Klingner},~\bfnm{Jeff}\binits{J.}},
  \bauthor{\bsnm{Qtiesh},~\bfnm{Anas}\binits{A.}} \AND
  \bauthor{\bsnm{Ball},~\bfnm{Patrick}\binits{P.}}
(\byear{2014}).
\btitle{Updated statistical analysis of documentation of killings in the
  {S}yrian {A}rab {R}epublic}.
\bjournal{United Nations Office of the UN High Commissioner for Human Rights}.
\end{barticle}
\endbibitem

\bibitem[\protect\citeauthoryear{Provan and Ball}{1983}]{doi:10.1137/0212053}
\begin{barticle}[author]
\bauthor{\bsnm{Provan},~\bfnm{J.~Scott}\binits{J.~S.}} \AND
  \bauthor{\bsnm{Ball},~\bfnm{Michael~O.}\binits{M.~O.}}
(\byear{1983}).
\btitle{The Complexity of Counting Cuts and of Computing the Probability that a
  Graph is Connected}.
\bjournal{SIAM Journal on Computing}
\bvolume{12}
\bpages{777-788}.
\end{barticle}
\endbibitem

\bibitem[\protect\citeauthoryear{Rajaraman and Ullman}{2012}]{rajaraman_2012}
\begin{bbook}[author]
\bauthor{\bsnm{Rajaraman},~\bfnm{Anand}\binits{A.}} \AND
  \bauthor{\bsnm{Ullman},~\bfnm{Jeffrey~David}\binits{J.~D.}}
(\byear{2012}).
\btitle{Mining of massive datasets}.
\bpublisher{Cambridge University Press}.
\end{bbook}
\endbibitem

\bibitem[\protect\citeauthoryear{Sadinle et~al.}{2014}]{sadinle_2014}
\begin{barticle}[author]
\bauthor{\bsnm{Sadinle},~\bfnm{Mauricio}\binits{M.}} \betal{et~al.}
(\byear{2014}).
\btitle{Detecting duplicates in a homicide registry using a Bayesian
  partitioning approach}.
\bjournal{The Annals of Applied Statistics}
\bvolume{8}
\bpages{2404--2434}.
\end{barticle}
\endbibitem

\bibitem[\protect\citeauthoryear{Shrivastava}{2017}]{shrivastava2017optimal}
\begin{binproceedings}[author]
\bauthor{\bsnm{Shrivastava},~\bfnm{Anshumali}\binits{A.}}
(\byear{2017}).
\btitle{Optimal Densification for Fast and Accurate Minwise Hashing}.
In \bbooktitle{Proceedings of the Internation Conference on Machine Learning
  (ICML)}.
\end{binproceedings}
\endbibitem

\bibitem[\protect\citeauthoryear{Shrivastava and
  Li}{2014a}]{shrivastava2014defense}
\begin{binproceedings}[author]
\bauthor{\bsnm{Shrivastava},~\bfnm{Anshumali}\binits{A.}} \AND
  \bauthor{\bsnm{Li},~\bfnm{Ping}\binits{P.}}
(\byear{2014}a).
\btitle{In Defense of Minhash over Simhash}.
In \bbooktitle{Proceedings of the Seventeenth International Conference on
  Artificial Intelligence and Statistics}
\bpages{886--894}.
\end{binproceedings}
\endbibitem

\bibitem[\protect\citeauthoryear{Shrivastava and
  Li}{2014b}]{shrivastava2014densifying}
\begin{binproceedings}[author]
\bauthor{\bsnm{Shrivastava},~\bfnm{Anshumali}\binits{A.}} \AND
  \bauthor{\bsnm{Li},~\bfnm{Ping}\binits{P.}}
(\byear{2014}b).
\btitle{Densifying one permutation hashing via rotation for fast near neighbor
  search}.
In \bbooktitle{Proceedings of The 31st International Conference on Machine
  Learning}
\bpages{557--565}.
\end{binproceedings}
\endbibitem

\bibitem[\protect\citeauthoryear{Shrivastava and
  Li}{2014c}]{shrivastava2014improved}
\begin{binproceedings}[author]
\bauthor{\bsnm{Shrivastava},~\bfnm{Anshumali}\binits{A.}} \AND
  \bauthor{\bsnm{Li},~\bfnm{Ping}\binits{P.}}
(\byear{2014}c).
\btitle{Improved Densification of One Permutation Hashing}.
In \bbooktitle{Proceedings of The 30th Conference on Uncertainty in Artificial
  Intelligence}.
\end{binproceedings}
\endbibitem

\bibitem[\protect\citeauthoryear{Spring and
  Shrivastava}{2017a}]{spring2017scalable}
\begin{binproceedings}[author]
\bauthor{\bsnm{Spring},~\bfnm{Ryan}\binits{R.}} \AND
  \bauthor{\bsnm{Shrivastava},~\bfnm{Anshumali}\binits{A.}}
(\byear{2017}a).
\btitle{Scalable and sustainable deep learning via randomized hashing}.
In \bbooktitle{Proceedings of the 23rd ACM SIGKDD International Conference on
  Knowledge Discovery and Data Mining}
\bpages{445--454}.
\bpublisher{ACM}.
\end{binproceedings}
\endbibitem

\bibitem[\protect\citeauthoryear{Spring and Shrivastava}{2017b}]{spring2017new}
\begin{barticle}[author]
\bauthor{\bsnm{Spring},~\bfnm{Ryan}\binits{R.}} \AND
  \bauthor{\bsnm{Shrivastava},~\bfnm{Anshumali}\binits{A.}}
(\byear{2017}b).
\btitle{A New Unbiased and Efficient Class of LSH-Based Samplers and Estimators
  for Partition Function Computation in Log-Linear Models}.
\bjournal{arXiv preprint arXiv:1703.05160}.
\end{barticle}
\endbibitem

\bibitem[\protect\citeauthoryear{Steorts, Hall and
  Fienberg}{2016}]{steorts??bayesian}
\begin{barticle}[author]
\bauthor{\bsnm{Steorts},~\bfnm{R.~C.}\binits{R.~C.}},
  \bauthor{\bsnm{Hall},~\bfnm{R.}\binits{R.}} \AND
  \bauthor{\bsnm{Fienberg},~\bfnm{S.~E.}\binits{S.~E.}}
(\byear{2016}).
\btitle{A {B}ayesian Approach to Graphical record Linkage and De-duplication}.
\bjournal{Journal of the American Statistical Society}.
\end{barticle}
\endbibitem

\bibitem[\protect\citeauthoryear{Steorts et~al.}{2014}]{steorts14comparison}
\begin{binproceedings}[author]
\bauthor{\bsnm{Steorts},~\bfnm{R.~C.}\binits{R.~C.}},
  \bauthor{\bsnm{Ventura},~\bfnm{S.~L.}\binits{S.~L.}},
  \bauthor{\bsnm{Sadinle},~\bfnm{M.}\binits{M.}} \AND
  \bauthor{\bsnm{Fienberg},~\bfnm{S.~E.}\binits{S.~E.}}
(\byear{2014}).
\btitle{A Comparison of Blocking Methods for Record Linkage}.
In \bbooktitle{International Conference on Privacy in Statistical Databases}
\bpages{253--268}.
\end{binproceedings}
\endbibitem

\bibitem[\protect\citeauthoryear{Tancredi and Liseo}{2011}]{liseo_2011}
\begin{barticle}[author]
\bauthor{\bsnm{Tancredi},~\bfnm{A.}\binits{A.}} \AND
  \bauthor{\bsnm{Liseo},~\bfnm{B.}\binits{B.}}
(\byear{2011}).
\btitle{A hierarchical {B}ayesian approach to record linkage and population
  size problems}.
\bjournal{Annals of Applied Statistics}
\bvolume{5}
\bpages{1553--1585}.
\end{barticle}
\endbibitem

\bibitem[\protect\citeauthoryear{{Wang}, {Shrivastava} and
  {Ryu}}{2017}]{2017arXiv170901190W}
\begin{barticle}[author]
\bauthor{\bsnm{{Wang}},~\bfnm{Y.}\binits{Y.}},
  \bauthor{\bsnm{{Shrivastava}},~\bfnm{A.}\binits{A.}} \AND
  \bauthor{\bsnm{{Ryu}},~\bfnm{J.}\binits{J.}}
(\byear{2017}).
\btitle{{FLASH: Randomized Algorithms Accelerated over CPU-GPU for Ultra-High
  Dimensional Similarity Search}}.
\bjournal{ArXiv e-prints}.
\end{barticle}
\endbibitem

\bibitem[\protect\citeauthoryear{Winkler}{2006}]{winkler_2006}
\begin{binproceedings}[author]
\bauthor{\bsnm{Winkler},~\bfnm{William~E}\binits{W.~E.}}
(\byear{2006}).
\btitle{Overview of record linkage and current research directions}.
In \bbooktitle{Bureau of the Census}.
\bpublisher{Citeseer}.
\end{binproceedings}
\endbibitem

\end{thebibliography}

\end{document}